\documentclass[envcountsame, runningheads, a4paper]{llncs}

\usepackage[dvipsnames]{xcolor}
\usepackage{amsmath,amsfonts,amssymb}
\usepackage[utf8]{inputenc}
\usepackage{graphicx}
\usepackage{tikz}
\usepackage{hyperref}
\usepackage[backgroundcolor=magenta!10!white]{todonotes}
\usepackage{setspace}
\usepackage{microtype}
\usepackage{enumitem}
\usepackage{ifthen}
\usepackage{caption}
\usepackage{microtype}
\usepackage[space]{cite}
\usepackage{numprint}
\usepackage{cleveref}
\usepackage{multirow}
\usepackage{subcaption}
\usepackage{placeins}
\usepackage{rotating}
\usepackage{tikz-qtree}
\usepackage{thmtools}
\usepackage{marvosym}

\usetikzlibrary{arrows,
automata,
calc,
decorations,
decorations.pathreplacing,
positioning,
shapes, 
decorations.pathmorphing, 
matrix}
\usepackage{academicons}
\usepackage{textcomp}
\usepackage{stmaryrd}
\usepackage{enumitem}
\usepackage{tikz}
\usepackage{orcidlink}
\usetikzlibrary{arrows, automata, shapes, decorations.pathmorphing, calc, positioning, matrix}
\usepackage[symbol]{footmisc}




\newcommand{\logic}{\cal L}
\newcommand{\reachvar}{\mathbf{r}}
\newcommand{\instantiate}{\operatorname{Instantiate}}

\newcommand{\enf}{\operatorname{Enf}}
\newcommand{\pre}{\operatorname{Pre}}
\newcommand{\post}{\operatorname{Post}}
\newcommand{\G}{\mathcal{G}}
\newcommand{\T}{\mathit{T}}
\newcommand{\init}{\mathit{Init}}
\newcommand{\safe}{\mathit{Safe}}
\newcommand{\reach}{\mathit{Reach}}
\newcommand{\goal}{\mathit{Goal}}
\newcommand{\thief}{\mathit{Thief}}
\newcommand{\guard}{\mathit{Guard}}
\newcommand{\var}{\mathcal{V}}

\newcommand{\varp}{\mathcal{V'}}

\newcommand{\subsp}{[\var/\varp]}
\newcommand{\subsm}{[\varp/\var]}
\newcommand{\smallv}{\mathit{v}}
\newcommand{\smallvp}{\mathit{v'}}
\newcommand{\game}{$\G = \langle \init, \safe, \reach, \goal \rangle$}
\newcommand{\sat}{\operatorname{Sat}}
\newcommand{\unsat}{\operatorname{Unsat}}
\newcommand{\interpol}{\operatorname{Interpolate}}
\newcommand{\algreach}{\operatorname{Reach}}

\newcommand{\plays}{\operatorname{Plays}}

\newcommand{\playerR}{\texttt{{REACH}}}
\newcommand{\playerS}{\texttt{{SAFE}}}
\newcommand{\stratR}{\sigma_{\mathit{R}}}
\newcommand{\stratS}{\sigma_{\mathit{S}}}
\newcommand{\simsynth}{\textsc{SimSynth}}
\newcommand{\cabpy}{\textsc{CabPy}}
\newcommand{\domain}{\mathcal{D}}
\newcommand{\true}{\texttt{true}}
\newcommand{\false}{\texttt{false}}

\newcommand{\len}{\operatorname{len}}

\newcommand{\size}{\operatorname{size}}

\newcommand{\combine}{\operatorname{combine}}

\renewcommand{\S}{\mathit{S}}

\newcommand{\sstrat}{\mathfrak{S}}

\newcommand{\interpolate}{\operatorname{Interpolate}}

\usepackage{caption}


\usepackage[ruled,vlined,english,linesnumbered]{algorithm2e}
\makeatletter
\newcommand{\RemoveAlgoNumber}{\renewcommand{\fnum@algocf}{\AlCapSty{\AlCapFnt\algorithmcfname}}}
\newcommand{\RevertAlgoNumber}{\algocf@resetfnum}
\makeatother

\colorlet{christelColor}{Apricot!30!white}
\colorlet{simonColor}{Yellow!30!white}
\colorlet{florianColor}{Brown!30!white}
\colorlet{julianColor}{Blue!20!white}
\colorlet{norineColor}{Green!20!white}


\newcounter{flocomments}

 \newcounter{juliancomments}

\newcommand{\orcid}[1]{\href{https://orcid.org/#1}{\textcolor[HTML]{A6CE39}{\aiOrcid}}}

\title{Causality-Based Game Solving
\thanks{This work was partially supported by DFG grant 389792660 as part of TRR~248 -- CPEC, see \url{https://perspicuous-computing.science}, the Cluster of Excellence EXC 2050/1 (CeTI, project ID 390696704, as part of Germany’s Excellence Strategy), DFG-projects BA-1679/11-1 and BA-1679/12-1, the Research Training Group QuantLA (GRK 1763), and by the European Research Council (ERC) Grant OSARES (No. 683300).}}

\author{Christel Baier\inst{1}\orcidID{0000-0002-5321-9343} \and 
	Norine Coenen\inst{2}\orcidID{0000-0003-2066-1511} \and\\ 
	Bernd Finkbeiner\inst{2}\orcidID{0000-0002-4280-8441} \and
	Florian Funke\inst{1}\orcidID{0000-0001-7301-1550}\and Simon Jantsch\inst{1}\orcidID{0000-0003-1692-2408} \and 
	Julian Siber\inst{2}\textsuperscript{(\Letter)}\orcidID{0000-0003-0842-0029}}
\authorrunning{C. Baier et al.}
\institute{
	Technische Universität Dresden, Dresden, Germany 
	\email{\{christel.baier,florian.funke,simon.jantsch\}@tu-dresden.de}\and
	CISPA Helmholtz Center for Information Security, Saarbrücken, Germany
	\email{\{norine.coenen,finkbeiner,julian.siber\}@cispa.de}
}

\begin{document}
\maketitle

\begin{abstract} 
	We present a causality-based algorithm for solving two-player reachability games represented by logical constraints.
	These games are a useful formalism to model a wide array of problems arising, e.g., in program synthesis. 
	Our technique for solving these games is based on the notion of \emph{subgoals}, which are slices of the game that the reachability player necessarily needs to pass through in order to reach the goal. 
	We use Craig interpolation to identify these necessary sets of moves and recursively slice the game along these subgoals. 
	Our approach allows us to infer winning strategies that are structured along the subgoals. 
	If the game is won by the reachability player, this is a strategy that progresses through the subgoals towards the final goal; if the game is won by the safety player, it is a permissive strategy that completely avoids a single subgoal. 
	We evaluate our prototype implementation on a range of different games. 
	On multiple benchmark families, our prototype scales dramatically better than previously available tools. 
\end{abstract}

\section{Introduction}
Two-player games are a fundamental model in logic and verification due to their connection to a wide range of topics such as decision procedures, synthesis and control~\cite{Harding05,Alur15,Alur16,BLOEM2012911,BLOEM20073,6225075}. 
Algorithmic techniques for \emph{finite-state} two-player games have been studied extensively for many acceptance conditions~\cite{GraedelTW2002}.
For \emph{infinite-state} games most problems are directly undecidable. 
However, infinite state spaces occur naturally in domains like software synthesis~\cite{RyzhykCKLH09} and cyber-physical systems~\cite{JessenRLD07}, and hence handling such games is of great interest. 
An elegant classification of infinite-state games that can be algorithmically handled, depending on the acceptance condition of the game, was given in~\cite{deAlfaroHM2001}.
The authors assume a symbolic encoding of the game in a very general form.
More recently, incomplete procedures for solving infinite-state two-player games specified using logical constraints were studied~\cite{BeyeneCPR14,FarzanK17}.
While~\cite{BeyeneCPR14} is based on automated theorem-proving for Horn formulas and handles a wide class of acceptance conditions, the work in~\cite{FarzanK17} focusses on reachability games specified in the theory of linear arithmetic, and uses sophisticated decision procedures for that theory.

In this paper, we present a novel technique for solving logically represented reachability games based on the notion of \emph{subgoals}.
A \emph{necessary} subgoal is a transition predicate that is satisfied at least once on every play that reaches the overall goal.
It represents an intermediate target that the reachability player must reach in order to win. Subgoals open up game solving to the study of cause-effect relationships in the form of counterfactual reasoning~\cite{counterfactual}: If a cause (the subgoal) had not occurred, then the effect (reaching the goal) would not have happened.
Thus for the safety player, a necessary subgoal provides a chance to win the game based on local information:
If they control all states satisfying the pre-condition of the subgoal, then any strategy that in these states picks a transition outside of the subgoal is  winning. 
Finding such a necessary subgoal may let us conclude that the safety player wins without ever having to unroll the transition relation.

On the other hand, passing through a necessary subgoal is in general not enough for the reachability player to win. We call a subgoal \emph{sufficient} if indeed the reachability player has a winning strategy from every state satisfying the post-condition of the subgoal.
Dual to the description in the preceding paragraph, sufficient subgoals provide a chance for the reachability player to win the global game as they must merely reach this intermediate target. The two properties differ in one key aspect: While necessity of a subgoal only considers the paths of the game arena, for sufficiency the game structure is crucial.  

We show how Craig interpolants can be used to compute necessary subgoals, making our methods applicable to games represented by any logic that supports interpolation. In contrast, determining whether a subgoal is sufficient requires a partial solution of the given game. This motivates the following recursive approach. We slice the game along a necessary subgoal into two parts, the pre-game and the post-game.
In order to guarantee these games to be smaller, we solve the post-game under the assumption that the considered subgoal was bridged \emph{for the last time}.
We conclude that the safety player wins the overall game if they can avoid all initial states of the post-game that are winning for the reachability player.
Otherwise, the pre-game is solved subject to the winning condition given by the sufficient subgoal consisting of these states. 
This approach does not only determine which player wins from each initial state, but also computes symbolically represented winning strategies with a causal structure. 
Winning safety player strategies induce necessary subgoals that the reachability player cannot pass, which constitutes a cause for their loss. Winning reachability player strategies represent a sequence of sufficient subgoals that will be passed, providing an explanation for the win.

The Python-based implementation \cabpy{} of our approach was used to compare its performance to \simsynth \cite{FarzanK17}, which is, to the best of our knowledge, the only other available tool for solving linear arithmetic reachability games. Our experiments demonstrate that our algorithm is competitive in many case studies. We can also confirm the expectation that our approach heavily benefits from qualitatively expressive Craig interpolants. It is noteworthy that like \simsynth{} our approach is fully automated and does not require any input in the form of hints or templates. 
Our contributions are summarized as follows:
\begin{itemize}[topsep=0.5ex]
	\item We introduce the concept of \emph{necessary} and \emph{sufficient subgoals} and show how Craig interpolation can be used to compute necessary subgoals (\Cref{sec:subgoals}).
	\item We describe an algorithm for solving logically represented two-player reachability games using these concepts.
    We also discuss how to compute representations of winning strategies in our approach (\Cref{sec:gamesolving}).
	\item We evaluate our approach experimentally through our Python-based tool \cabpy, demonstrating a competitive performance compared to the previously available tool \simsynth{} on various case studies (\Cref{sec:experiments}).
\end{itemize}

\textbf{Related Work. } 
The problem of solving linear arithmetic games is addressed in~\cite{FarzanK17} using an approach that relies on a dedicated decision procedure for quantified linear arithmetic formulas, together with a method to generalize safety strategies from truncated versions of the game  that end after a prescribed number of rounds.
Other approaches for solving infinite-state games include deductive methods that compute the winning regions of both players using proof rules~\cite{BeyeneCPR14}, 
predicate abstraction where an abstract controlled predecessor operation is used on the abstract game representation~\cite{WalkerR14}, 
and symbolic BDD-based exploration of the state space~\cite{Edelkamp2002}. 
Additional techniques are available for finite-state games, e.g., generalizing winning runs into a winning strategy for one of the players~\cite{NarodytskaLBRW14}. 

Our notion of subgoal is related to the concept of landmarks as used in planning~\cite{HoffmannPS11}. Landmarks are milestones that must be true on every successful plan, and they can be used to decompose a planning task into smaller sub-tasks. 
Landmarks have also been used in a game setting to prevent the opponent from reaching their goal using counter-planning~\cite{PozancoEFB18}. 
Whenever a planning task is unsolvable, one method to find out why is checking hierarchical abstractions for solvability and finding the components causing the problem~\cite{SreedharanSSK19}. 

Causality-based approaches have also been used for model checking of multi-threaded concurrent programs~\cite{DBLP:conf/concur/KupriyanovF13,DBLP:conf/cav/KupriyanovF14}. 
In our approach, we use Craig interpolation to compute the subgoals. 
Interpolation has already been used in similar contexts before, for example to extract winning strategies from game trees~\cite{EenLNR15} or to compute new predicates to refine the game abstractions~\cite{SlicingAbstractions}. 
In ~\cite{FarzanK17}, interpolation is used to synthesize concrete winning strategies from so called \emph{winning strategy skeletons}, which describe a set of strategies of which at least one is winning.

\section{Motivating Example} \label{sec:motivation}

Consider the scenario that an expensive painting is displayed in a large exhibition room of a museum.
It is secured with an alarm system that is controlled via a control panel on the opposite side of the room.
A security guard is sleeping at the control panel and occasionally wakes up to check whether the alarm is still armed.
To steal the painting, a thief first needs to disable the alarm and then reach the painting before the alarm has been reactivated. We model this scenario as a two-player game between a safety player (the guard) and a reachability player (the thief) in the theory of linear arithmetic.
The moves of both players, their initial positions, and the goal condition are described by the formulas:

\begin{align*}
\init &\equiv && \lnot \reachvar \land x = 0 \land y = 0 \land p = 0 \land a = 1 \land t = 0, &&&\\
\guard &\equiv &&\neg\reachvar \land \reachvar' \land x' = x \land y' = y \land p' = p  &&&\\
& &&\land ((t' = t - 1 \land a' = a)\lor (t \leq 0 \land t' = 2)),&&&(\text{sleep or wake up})\\
\thief &\equiv &&  \reachvar \land \neg\reachvar' \land t' = t \\
	&&&\land x + 1 \ge x' \ge x - 1 \land y + 1 \ge y' \ge y - 1&&&(\text{move})\\
& &&\land (x' \neq 0 \lor y' \neq 10 \implies a' = a)&&&(\text{alarm off})\\
& &&\land (x' \neq 10 \lor y' \neq 5 \lor a = 1 \implies p' = p),&&&(\text{steal})\\
\goal &\equiv && \lnot \reachvar \land p = 1. &&&
\end{align*} 

The thief's position in the room is modeled by two coordinates $x,y \in \mathbb{R}$ with initial value $(0,0)$, and with every transition the thief can move some bounded distance. 
Note that we use primed variables to represent the value of variables after taking a transition.
The control panel is located at $(0,10)$ and the painting at $(10,5)$. 
The status of the alarm and the painting are described by two boolean variables $a,p \in \{0,1\}$. 
The guard wakes up every two time units, modeled by the variable $t \in \mathbb{R}$. 
The variables $x,y$ are bounded to the interval $[0,10]$ and $t$ to $[0,2]$. 
The boolean variable \textbf{r} encodes who makes the next move. 
In the presented configuration, the thief needs more time to move from the control panel to the painting than the guard will sleep. 
It follows that there is a winning strategy for the guard, namely, to always reactivate the alarm upon waking up.

Although it is intuitively fairly easy to come up with this strategy for the guard, it is surprisingly hard for game solving tools to find it. The main obstacle is the infinite state space of this game.
Our approach for solving games represented in this logical way imitates \emph{causal reasoning}: 
Humans observe that in order for the thief to steal the painting (i.e., the effect $p=1$), a transition must have been taken whose source state does not satisfy the pre-condition of (steal) while the target state does.  
Part of this cause is the condition $a=0$, i.e., the alarm is off. Recursively, in order for the effect $a=0$ to happen, a transition setting $a$ from $1$ to $0$ must have occurred, and so on. 

Our approach captures these cause-effect relationships through the notion of \emph{necessary subgoals}, which are essential milestones that the reachability player has to transition through in order to achieve their goal.
The first necessary subgoal corresponding to the intuitive description above is
$$C_1 = (\guard \lor \thief) \land p \neq 1 \land p' = 1.$$
In this case, it easy to see that $C_1$ is also a \emph{sufficient subgoal}, meaning that all successor states of $C_1$ are winning for the thief. Therefore, it is enough to solve the game with the modified objective to reach those predecessor states of $C_1$ from which the thief can \emph{enforce} $C_1$ being the next move (even if it is not their turn). Doing so recursively produces the necessary subgoal
$$C_2 = (\guard \lor \thief) \land a \neq 0 \land a' = 0,$$
meaning that some transition must have caused the effect that the alarm is disabled. However, $C_2$ is \emph{not} sufficient which can be seen by recursively solving the game spanning from successor states of $C_2$ to $C_1$. This computation has an important caveat: After passing through $C_2$, it may happen that $a$ is reset to $1$ at a later point (in this particular case, this constitutes precisely the winning strategy of the safety player), which means that there is no canonical way to slice the game along this subgoal into smaller parts. Hence the recursive call solves the game from $C_2$ to $C_1$ \emph{subject to} the bold assumption that any move from $a = 0$ to $a' = 1$ is winning for the guard. This generally underapproximates the winning states of the thief. Remarkably, we show that this approximation is enough to build winning strategies for \emph{both} players from their respective winning regions. In this case, it allows us to infer that moving through $C_2$ is always a losing move for the thief. However, at the same time, any play reaching $\goal$ has to move through $C_2$. It follows that the thief loses the global game.

We evaluated our method on several configurations of this game, which we call \emph{Mona Lisa}. The results in Section \ref{sec:experiments} support our conjecture that the room size has little influence on the time our technique needs to solve the game.

\section{Preliminaries}
\label{sec:prelims}
We consider two-player reachability games defined by formulas in a given logic $\logic$.
We let $\logic(\var)$ be the $\logic$-formulas over a finite set of variables $\var$, also called \emph{state predicates} in the following.
We call $\varp = \{\smallvp \mid \smallv \in \var\}$ the set of \emph{primed variables}, which are used to represent the value of variables after taking a transition.
Transitions are expressed by formulas in the set $\logic(\var \cup \varp)$, called \emph{transition predicates}.
For some formula $\varphi \in \logic(\var)$, we denote the substitution of all variables by their primed variant by $\varphi\subsp$. Similarly, we define $\varphi\subsm$.

For our algorithm we will require the satisfiability problem of $\logic$-formulas to be decidable and \emph{Craig interpolants} \cite{Craig1957} to exist for any two mutually unsatisfiable formulas.
Formally, we assume there is a function $\sat : \logic(\var) \to \mathbb{B}$ that checks the satisfiability of some formula $\varphi \in \logic(\var)$ and an unsatisfiability check $\unsat : \logic(\var) \rightarrow \mathbb{B}$. 
For interpolation, we assume that there is a function $\interpol : \logic(\var) \times \logic(\var) \to \logic(\var)$ computing a \emph{Craig interpolant} for mutually unsatisfiable formulas: If $\varphi ,\psi \in \logic(\var)$ are such that $\unsat(\varphi\land\psi)$ holds, then $\psi \implies \interpol(\varphi,\psi)$ is valid, $\interpol(\varphi,\psi)\land \varphi$ is unsatisfiable, and $\interpol(\varphi,\psi)$ only contains variables shared by $\varphi$ and $\psi$.

These functions are provided by many modern \emph{Satisfiability Modulo Theories} (SMT) solvers, in particular for the theories of linear integer arithmetic and linear real arithmetic, which we will use for all our examples. Note that interpolation is usually only supported for the quantifier-free fragments of these logics, while our algorithm will introduce existential quantifiers.
Therefore, we resort to quantifier elimination wherever necessary, for which there are known procedures for both linear integer arithmetic and linear real arithmetic formulas~\cite{presburger1929uber,Monniaux2008}.

In order to distinguish the two players, we will assume that a Boolean variable called $\reachvar \in \var$ exists, which holds exactly in the states controlled by the reachability player.
For all other variables $v \in \var$, we let $\domain(v)$ be the domain of $v$, and we define $\domain = \bigcup \{\domain(v) \mid v \in \var\}$.
In the remainder of the paper, we consider the variables $\var$ and their domains to be fixed.

\begin{definition}[Reachability Game]
	A reachability game is defined by a tuple \game{}, where $\init \in \logic(\var)$ is the \emph{initial condition}, $\safe \in \logic(\var \cup \varp)$ defines the transitions of player $\playerS$, $\reach \in \logic(\var \cup \varp)$ defines the transitions of player $\playerR$ and $\goal \in \logic(\var)$ is the \emph{goal condition}.

  We require the formulas $(\safe \implies \neg \reachvar)$ and $(\reach \implies \reachvar)$ to be valid.
\end{definition}

A \emph{state} $s$ of $\G$ is a valuation of the variables $\var$, i.e., a function $s\colon\var \to \domain$ that satisfies $s(v) \in \domain(v)$ for all $v \in \var$.
We denote the set of states by $S$, and we let $S_\playerS$ be the states $s$ such that $s(\reachvar) = \false$, and $S_\playerR$ be the states $s$ such that $s(\reachvar) = \true$. The variable $\reachvar$ determines whether \playerR{} or \playerS{} makes the move out of the current state, and in particular $\safe \land \reach$ is unsatisfiable. 

Given a state predicate $\varphi \in \logic(\var)$, we denote by $\varphi(s)$ the closed formula we get by replacing each occurrence of variable $v \in \var$ in $\varphi$ by $s(v)$.
Similarly, given a transition predicate $\tau \in \logic(\var \cup \varp)$ and states $s,s'$, we let $\tau(s,s')$ be the formula we obtain by replacing all occurrences of $v \in \var$ in $\tau$ by $s(v)$, and all occurrences of $v' \in \varp$ in $\tau$ by $s'(v)$. For replacing only $v \in \var$ by $s(v)$, we define $\tau(s)\in\logic(\var')$. A \emph{trap state} of $\G$ is a state $s$ such that $(\safe \lor \reach)(s)\in\logic(\var')$ is unsatisfiable (i.e., $s$ has no outgoing transitions).

A \emph{play} of $\G$ starting in state $s_0$ is a finite or infinite sequence of states $\rho = s_0 s_1 s_2 \ldots \in \S^+ \cup \S^\omega$ such that for all $i < \len(\rho)$ either $\safe(s_i,s_{i+1})$ or $\reach(s_i,s_{i+1})$ is valid, and if $\rho$ is a finite play, then $s_{\len(\rho)}$ is required to be a trap state.
Here, $\len(s_0\ldots s_n) = n$ for finite plays, and $\len(\rho) = \infty$ if $\rho$ is an infinite play. The set of plays of some game \game{} is defined as $\plays(\G) = \{\rho = s_0 s_1 s_2 \ldots \mid \rho\text{ is a play in } \G \text{ s.t. } \init(s_0)\text{ holds} \}$.
$\playerR$ \emph{wins} some play $\rho = s_0 s_1 \ldots$ if the play reaches a goal state, i.e., if there exists some integer $0\leq k \leq\len(\rho)$ such that $\goal(s_k)$ is valid.
Otherwise, $\playerS$ wins play $\rho$.
A \emph{reachability strategy} $\stratR$ is a function $\stratR : \S^*S_\playerR \to \S$ such that if $\stratR(\omega s) =s'$ and $s$ is not a trap state, then $\reach(s,s')$ is valid. 
We say that a play $\rho = s_0 s_1 s_2 \ldots$ is \emph{consistent} with $\stratR$ if for all $i$ such that $s_i(\reachvar) = \true$ we have $s_{i+1} = \stratR(s_0 \ldots s_i)$.
A reachability strategy $\stratR$ is \emph{winning} from some state $s$ if $\playerR$ wins every play consistent with $\stratR$ starting in $s$. We define \emph{safety strategies} $\stratS$ for $\playerS$ analogously. We say that a player \emph{wins in or from a state}~$s$ if they have a winning strategy from $s$. Lastly, $\playerR$ \emph{wins the game} $\G$ if they win from some initial state.
Otherwise, $\playerS$ wins.

We often project a transition predicate $T$ onto the source or target states of transitions satisfying $T$, which is taken care of by the formulas $\pre(\T)=\exists \varp.\:\T$ and $\post(\T)=\exists \var.\:\T$.
The notation $\exists \var$ (resp. $\exists \varp$) represents the existential quantification over all variables in the corresponding set.
Given $\varphi \in \logic(\var)$, we call the set of transitions in $\G$ that move from states not satisfying $\varphi$, to states satisfying $\varphi$, the \emph{instantiation} of $\varphi$, formally:
\[\instantiate(\varphi,\G)=(\safe \lor \reach) \land \lnot\varphi\land\varphi'.\]

\section{Subgoals}
\label{sec:subgoals}

We formally define the notion of subgoals.
Let $\G = \langle \init, \safe, \reach, \goal \rangle$ be a fixed reachability game throughout this section, where we assume that $\init \land \goal$ is unsatisfiable.
Whenever this assumption is not satisfied in our algorithm, we will instead consider the game $\G' = \langle \init \land \neg \goal, \safe, \reach, \goal \rangle$ which does satisfy it.
As states in $\init \land \goal$ are immediately winning for \playerR, this is not a real restriction.

\begin{definition}[Enforceable transitions]
	The set of \emph{enforceable transitions} relative to a transition predicate $T \in \logic(\var \cup \varp)$ is defined by the formula
	\[\enf(\T,\G)=\; (\safe \lor \reach) \land \T \land \lnot \exists \varp.\:\big(\safe\land\lnot\T\big).\]
	
\end{definition}

The enforceable transitions operator serves a purpose similar to the \emph{controlled predecessors} operator commonly known in the literature, which is often used in a backwards fixed point computation, called \emph{attractor construction} \cite{Thomas95}. For both operations, the idea is to determine controllability by \playerR{}. The main difference is that we do not consider the whole transition relation, but only a predetermined set of transitions and check from which predecessor states the post-condition of the set can be enforced by \playerR{}. These include all transitions in $T$ controlled by \playerR{} and additionally transitions in $T$ controlled by \playerS{} such that \emph{all other transitions} in the origin state of the transition also satisfy $T$. The similarity with the controlled predecessor is exemplified by the following lemma:
\begin{lemma}
  \label{lem:enf}
  Let $T$ be a transition predicate, and suppose that all states satisfying $\post(T)\subsm$ are winning for \playerR{} in $\G$.
  Then all states in $\pre(\enf(T,\G))$ are winning for \playerR{} in $\G$.
\end{lemma}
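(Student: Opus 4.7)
The plan is to show that from any state $s$ satisfying $\pre(\enf(T,\G))$, player $\playerR$ can force the next transition of the play to lie in $T$; the resulting successor will satisfy $\post(T)\subsm$, and we can then append the hypothesized winning strategy from that successor. The proof is therefore a single-round reduction followed by strategy concatenation.

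First I unpack the definition. Since $\pre(\enf(T,\G)) = \exists \varp.\,\enf(T,\G)$, any state $s$ in it admits a valuation $s'$ of $\varp$ with $\enf(T,\G)(s,s')$, which yields three facts: (i)~$(\safe \lor \reach)(s,s')$ holds, so in particular $s$ is not a trap state; (ii)~$T(s,s')$ holds; and (iii)~$\neg \exists \varp.\,\bigl(\safe(s) \land \neg T(s)\bigr)$ holds, i.e., every $s''$ with $\safe(s,s'')$ also satisfies $T(s,s'')$. I then split on ownership. If $s \in S_\playerR$, then $s(\reachvar) = \true$, and since $\safe \implies \neg \reachvar$, the witness must satisfy $\reach(s,s')$; \playerR{} plays the move $s \to s'$, and because $T(s,s')$ holds, $s'$ satisfies $\post(T)\subsm = (\exists \var.\,T)\subsm$. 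If instead $s \in S_\playerS$, then only $\safe$-transitions originate from $s$, and by~(iii) every move $(s,s'')$ that \playerS{} can make satisfies $T(s,s'')$, so every reachable successor again satisfies $\post(T)\subsm$. In both cases, after exactly one round the play is guaranteed to occupy a state in $\post(T)\subsm$.

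Finally, the hypothesis gives us, for each state $t$ with $t \models \post(T)\subsm$, a winning reachability strategy $\sigma_t$ from $t$. I construct a winning strategy $\stratR$ from $s$ by concatenation: if $s \in S_\playerR$, then $\stratR(s) = s'$ where $s'$ is the chosen witness above; and for any prolonged history $s\,t\,h$ with $h \in \S^*$, set $\stratR(s\,t\,h) = \sigma_t(t\,h)$. Every play consistent with $\stratR$ starts in $s$, moves in one step (chosen by either player) to some $t \models \post(T)\subsm$, and from then on is consistent with $\sigma_t$; hence it reaches $\goal$, proving $\stratR$ winning from $s$. The main subtlety, rather than a real obstacle, is careful bookkeeping of the substitution $\post(T)\subsm$ and of the fact that in the \playerS{}-owned case the one-step reduction works \emph{regardless} of which legal move \playerS{} picks, which is precisely what clause~(iii) of $\enf$ guarantees.
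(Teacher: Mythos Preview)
Your proof is correct and follows essentially the same approach as the paper's: a case split on which player controls $s$, using the definition of $\enf$ to guarantee that the next transition lies in $T$, and then invoking the hypothesis on the successor. You are somewhat more explicit than the paper about the strategy concatenation and about why the witness in the $\playerR$-case must be a $\reach$-transition, but the underlying argument is identical.
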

\begin{proof}
  Clearly, all states in $\pre(\enf(T,\G))$ that are under the control of \playerR{} are winning for \playerR{}, as in any such state they have a transition satisfying $T$ (observe that $\enf(T,\G) \implies T$ is valid), which leads to a winning state by assumption.

  So let $s$ be a state satisfying $\pre(\enf(T,\G))$ that is under the control of \playerS{}.
  As $\pre(\enf(T,\G))(s)$ is valid, $s$ has a transition that satisfies $T$ (in particular, $s$ is not a trap state).
  Furthermore, we know that there is no $s' \in \S$ such that $\safe(s,s')\land\lnot\T(s,s')$ holds, and hence there is no transition satisfying $\lnot\T$ from $s$. Since $\post(T)\subsm$ is winning for \playerR{}, it follows that from $s$ player \playerS{} cannot avoid playing into a winning state of \playerR{}.
  \qed
\end{proof}

We now turn to a formal definition of \emph{necessary subgoals}, which intuitively are sets of transitions that appear on every play that is winning for \playerR{}. 
\begin{definition}[Necessary subgoal]\label{necessary_subgoal}
	A \emph{necessary subgoal} $C \in \logic(\var \cup \varp)$  for~$\G$ is a transition predicate such that for every play $\rho = s_0 s_1 \ldots$ of $\G$ and $n \in \mathbb{N}$ such that $\goal(s_{n})$ is valid, there exists some $k < n$ such that $C(s_k,s_{k+1})$ is valid.
\end{definition}

Necessary subgoals provide a means by which winning safety player strategies can be identified, as formalized in the following lemma.

\begin{lemma}\label{prop_safestrat}
	A safety strategy $\stratS$ is winning in $\G$ if and only if there exists a necessary subgoal $\mathit{C}$ for $\G$ such that for all plays $\rho = s_0 s_1 \ldots$ of $\G$ consistent with~$\stratS$ there is no $n \in \mathbb{N}$ such that $C(s_n,s_{n+1})$ holds. 
\end{lemma}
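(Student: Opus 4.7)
The plan is to prove the two directions separately. For $(\Leftarrow)$, I would suppose $C$ is a necessary subgoal and no play $\rho = s_0 s_1 \ldots \in \plays(\G)$ consistent with $\stratS$ contains a $C$-transition. Taking an arbitrary such $\rho$ and arguing by contradiction, if $\playerR$ won $\rho$ then some $n$ would satisfy $\goal(s_n)$; the necessary-subgoal property would then produce a $k < n$ with $C(s_k, s_{k+1})$ valid, contradicting the avoidance hypothesis. Hence $\playerS$ wins every play in $\plays(\G)$ consistent with $\stratS$, so $\stratS$ is winning.

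For $(\Rightarrow)$, assuming $\stratS$ is winning, the only real decision is the choice of witness; I would pick
\[ C \;:=\; (\safe \lor \reach) \land \lnot\goal \land \goal', \]
the set of transitions that constitute a first entry into the goal. To check that $C$ is a necessary subgoal, I would take any play $\rho = s_0 s_1 \ldots$ of $\G$ with $\goal(s_n)$ for some $n$. From $\init(s_0)$ and the standing assumption that $\init \land \goal$ is unsatisfiable we get $\lnot\goal(s_0)$, so there is a least index $m \geq 1$ with $\goal(s_m)$. The transition $(s_{m-1}, s_m)$ then witnesses $C(s_{m-1}, s_m)$ — by minimality of $m$ and by the fact that $\rho$ is a play — with $m - 1 < n$. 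It remains to observe that no play consistent with $\stratS$ can contain a $C$-transition, because any such transition would immediately land in $\goal$ and thus be won by $\playerR$, contradicting that $\stratS$ is winning.

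The argument is essentially mechanical, so there is no serious obstacle; the only subtlety is selecting $C$ so that it is simultaneously necessary on every goal-reaching play and incompatible with $\stratS$ winning. Taking $C$ to encode exactly the transitions entering $\goal$ from outside achieves both properties at once, which is really the content of the lemma: the minimal evidence that $\stratS$ wins is precisely that the reachability player is kept away from crossing the boundary of $\goal$.
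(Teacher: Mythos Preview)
Your proof is correct and follows essentially the same approach as the paper. The only difference is the witness for $(\Rightarrow)$: you take $C = (\safe \lor \reach) \land \lnot\goal \land \goal' = \instantiate(\goal,\G)$, whereas the paper takes the slightly larger predicate $\goal\subsp$ (all transitions ending in $\goal$, not just those entering from outside). Both choices rely on the standing assumption $\unsat(\init \land \goal)$ of Section~\ref{sec:subgoals} to rule out $n=0$, and both yield the same conclusion by the same mechanism, so this is a cosmetic variation rather than a different argument.
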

\begin{proof}
  ``$\implies$''. The transition predicate $\goal\subsp$ (i.e., transitions with endpoints satisfying $\goal$) is clearly a necessary subgoal. If $\stratS$ is winning for \playerS, then no play consistent with $\stratS$ contains a transition in this necessary subgoal. \\
  \noindent ``$\Longleftarrow$''. Let $C$ be a necessary subgoal such that no play consistent with $\stratS$ contains a transition of $C$. Then by  \Cref{necessary_subgoal} no play consistent with $\stratS$ contains a state satisfying $\goal$. Hence $\stratS$ is a winning strategy for \playerS.
	\qed
\end{proof}

Of course, the question remains how to compute non-trivial subgoals. Indeed, using $\goal$ as outlined in the proof above provides no further benefit over a simple backwards exploration (see~\Cref{rem:attractor} in the following section).

Ideally, a subgoal should represent an interesting key decision to focus the strategy search.
As we show next, Craig interpolation allows to extract partial causes for the mutual unsatisfiability of $\init$ and $\goal$ and can in this way provide necessary subgoals. 
Recall that a Craig interpolant $\varphi$ between $\init$ and $\goal$ is a state predicate that is implied by $\goal$, and unsatisfiable in conjunction with $\init$. 
In this sense, $\varphi$ describes an observable \emph{effect} that must occur if $\playerR{}$ wins, and the concrete transition that instantiates the interpolant \emph{causes} this effect.

\begin{proposition}\label{prop_necessary}
	Let $\varphi$ be a Craig interpolant for $\init$ and $\goal$. Then the transition predicate $\instantiate(\varphi,\G)$ is a necessary subgoal.
\end{proposition}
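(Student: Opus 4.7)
The plan is to unpack the definitions and exploit the boundary properties of the Craig interpolant to perform a discrete intermediate-value argument along a play.

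Recalling the interpolant specification, the assumption $\unsat(\init \land \goal)$ (which we have already imposed at the start of \Cref{sec:subgoals}) guarantees that $\varphi = \interpol(\init,\goal)$ satisfies both $\goal \implies \varphi$ being valid and $\init \land \varphi$ being unsatisfiable. Rephrased, every $\goal$-state satisfies $\varphi$ and every $\init$-state satisfies $\lnot\varphi$. These are precisely the two endpoints I want to exploit.

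Fix an arbitrary play $\rho = s_0 s_1 s_2 \ldots \in \plays(\G)$ and an index $n$ with $\goal(s_n)$ valid. By the definition of $\plays(\G)$ we have $\init(s_0)$, so $\lnot\varphi(s_0)$; by the interpolant property $\varphi(s_n)$ holds. Hence the set $K = \{ i \mid 0 \le i < n,\ \lnot\varphi(s_i) \}$ is nonempty (it contains $0$). Let $k = \max K$. Then $\lnot\varphi(s_k)$, and the maximality of $k$ combined with $\varphi(s_n)$ forces $\varphi(s_{k+1})$ (either $k+1 = n$, or $k+1 < n$ and $k+1 \notin K$). Moreover, because $\rho$ is a play, $(\safe \lor \reach)(s_k, s_{k+1})$ is valid. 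Putting these together yields validity of $(\safe \lor \reach)(s_k,s_{k+1}) \land \lnot\varphi(s_k) \land \varphi(s_{k+1})$, i.e.\ $\instantiate(\varphi,\G)(s_k,s_{k+1})$, with $k < n$ as required by \Cref{necessary_subgoal}.

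There is no serious obstacle; the proof is essentially the observation that along any sequence moving from a $\lnot\varphi$-state to a $\varphi$-state there must be a single transition witnessing the flip. The only point worth verifying carefully is that the variable-sharing clause of Craig interpolation places $\varphi$ in $\logic(\var)$ so that $\varphi(s_i)$ and $\varphi(s_{i+1})$ are well-defined on states, and that the cases $k=0$ and $k+1=n$ are handled uniformly by the ``largest index'' choice above. Both are immediate from the setup, so the argument closes as stated.
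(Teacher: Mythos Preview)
Your proof is correct and follows essentially the same argument as the paper: both use the interpolant's boundary conditions to get $\lnot\varphi(s_0)$ and $\varphi(s_n)$, then locate an index where $\varphi$ flips from false to true along the play, and observe that this transition satisfies $\instantiate(\varphi,\G)$. You are just slightly more explicit in isolating the flipping index via $k = \max K$, whereas the paper asserts its existence directly.
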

\begin{proof}
  As $\varphi$ is an interpolant, it holds that $\goal \implies \varphi$ is valid and $\init \land \varphi$ is unsatisfiable.
  Consider any play $\rho = s_0 s_1 \ldots$ of $\G$ such that $\goal(s_n)$ is valid for some $n \in \mathbb{N}$.
  It follows that $\lnot \varphi(s_0)$ and $\varphi(s_n)$ are both valid.
  Consequently, there is some $0 \leq i < n$ such that $\lnot \varphi(s_i)$ and $\varphi(s_{i+1})$ are both valid.
  As all pairs $(s_k,s_{k+1})$ satisfy either $\safe$ or $\reach$, it follows that $\big(\instantiate(\varphi,\G)\big)(s_i,s_{i+1})$ is valid.
  Hence, $\instantiate(\varphi,\G)$ is a necessary subgoal.
	\qed
\end{proof}

While avoiding a necessary subgoal is a winning strategy for \playerS{}, reaching a necessary subgoal is in general not sufficient to guarantee a win for \playerR{}.
This is because there might be some transitions in the necessary subgoal that produce the desired effect described by the Craig interpolant, but that trap \playerR{} in a region of the state space where they cannot enforce some other necessary effect to reach goal. 
For the purpose of describing a set of transitions that is guaranteed to be winning for the reachability player, we introduce \emph{sufficient subgoals}.

 \begin{definition}[Sufficient subgoal]
   A transition predicate $\mathit{F}\in\logic(\var\cup\varp)$ is called a \emph{sufficient subgoal} if $\playerR$ wins from every state satisfying $\post(\mathit{F})\subsm$.
 \end{definition}

\begin{example}
	Consider the Mona Lisa game $\G$ described in Section \ref{sec:motivation}.
	\[C_1 = (\guard \lor \thief) \land p \neq 1 \land p' = 1\]
	qualifies as sufficient subgoal, because $\playerR$ wins from every successor state as all those states satisfy $\goal$. 
	Also, every play reaching $\goal$ eventually passes $C_1$, and hence $C_1$ is also necessary. On the other hand, 
	\[C_2 = (\guard \lor \thief) \land a \neq 0 \land a' = 0\]
	is only a necessary subgoal in $\G$, because $\playerS$ wins from some (in fact all) states satisfying $\post(C_2)$.
	
\end{example}
 
If the set of transitions in the necessary subgoal $C$ that lead to winning states of \playerR{} is definable in $\logic$ then we call the transition predicate $F$ that defines it the \emph{largest sufficient subgoal} included in $C$. 
It is characterized by the properties (1) $F \implies C$ is valid, and (2) if $F'$ is such that $F \implies F'$ is valid, then either $F \equiv F'$, or $F'$ is not a sufficient subgoal. Since $C$ is a necessary subgoal and $F$ is maximal with the properties above, \playerR{} needs to see a transition in $F$ eventually in order to win. This balance of necessity and sufficiency allows us to partition the game along $F$ into a game that happens after the subgoal and one that happens before.

\begin{proposition}
		\label{lem:slicing}
	  Let $C$ be a necessary subgoal, and $F$ be the largest sufficient subgoal included in $C$. Then \playerR{} wins from an initial state $s$ in $\G$ if and only if \playerR{} wins from $s$ in the pre-game 
    \[\G_{pre} = \langle \init, \safe \land \neg F, \reach \land \neg F, \pre(\enf(F,\G)) \rangle.\]
\end{proposition}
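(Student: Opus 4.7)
The plan is to prove both implications separately, relying on the key reformulation that $F$ contains exactly those transitions of $C$ whose target state is winning for \playerR{} in $\G$. Indeed, this set is a sufficient subgoal included in $C$ by construction, and any other sufficient $F'$ with $F' \implies C$ must have $\post(F')\subsm$ winning for \playerR{}, whence $F' \implies F$; so this is the largest sufficient subgoal inside $C$ in the sense of the paper's characterization.

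For the ($\Leftarrow$) direction, I will start from a winning strategy $\stratR_{pre}$ in $\G_{pre}$ from $s$. Since $F$ is sufficient, $\post(F)\subsm$ is winning in $\G$, so by \Cref{lem:enf} the goal $\pre(\enf(F,\G))$ of $\G_{pre}$ is also winning in $\G$. I construct a strategy in $\G$ by following $\stratR_{pre}$ until either the play enters $\pre(\enf(F,\G))$, at which point I switch to a winning strategy from there, or \playerS{} takes an $F$-transition (available in $\G$ but not in $\G_{pre}$), in which case the resulting state lies in $\post(F)\subsm$ and I switch to a winning strategy there. Every play consistent with this combined strategy in $\G$ reaches $\goal$.

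For the ($\Rightarrow$) direction, I will take a winning strategy $\stratR$ in $\G$ and define $\stratR_{pre}$ to mimic $\stratR$ whenever $\stratR$ recommends a $\neg F$-transition (and arbitrarily otherwise). The critical observation is that if $\stratR(h s_k)$ is an $F$-transition at a \playerR{}-controlled state $s_k$, then $s_k \in \pre(\enf(F,\G))$: since \playerR{}-states have no outgoing $\safe$-transitions, the third conjunct of $\enf$ holds vacuously, while the first two are witnessed by $\stratR(h s_k)$ itself. Consequently, any play $\rho'$ of $\G_{pre}$ consistent with $\stratR_{pre}$ that never enters $\pre(\enf(F,\G))$ is in fact consistent with $\stratR$ in $\G$. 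A parallel trap analysis shows that any trap state of $\G_{pre}$ along such a $\rho'$ is either a trap of $\G$ or already lies in $\pre(\enf(F,\G))$. Thus $\rho'$ is a play of $\G$ consistent with $\stratR$ and so reaches $\goal$; by necessity of $C$ it contains a $C$-transition $s_i \to s_{i+1}$, and a residual-strategy argument shows $s_{i+1}$ is winning in $\G$. Hence this transition lies in $F$, contradicting that $\rho'$ is a play of $\G_{pre}$.

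The main obstacle will be the careful bookkeeping between plays of $\G$ and $\G_{pre}$, since removing $F$-transitions can introduce new trap states and since $\stratR$ may recommend moves unavailable in $\G_{pre}$. In particular, the residual-strategy step—showing that the state $s_{i+1}$ reached along a winning $\stratR$-consistent play is itself winning—requires choosing $i$ as the smallest index of a $C$-transition strictly before the first $\goal$-visit, so that the suffix strategy of $\stratR$ from $s_{i+1}$ is still winning in its own right.
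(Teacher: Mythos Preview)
Your proposal is correct and follows essentially the same approach as the paper: the backward direction combines the pre-game strategy with \Cref{lem:enf}, and the forward direction hinges on the observation that the first $C$-transition along a $\stratR$-consistent winning play must lie in $F$ (via the residual-strategy argument you describe), whence its source state already satisfies $\pre(\enf(F,\G))$. The only cosmetic difference is that the paper frames the forward direction as a contradiction against a putative winning \playerS{} strategy in $\G_{pre}$, whereas you directly construct and verify a \playerR{} strategy in $\G_{pre}$; the underlying insight is identical.
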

\begin{proof}
  ``$\implies$''. Suppose that \playerR{} wins in $\G$ from $s$ using strategy $\sigma_R$. Assume for a contradiction that \playerS{} wins in $\G_{pre}$ from $s$ using strategy $\sigma_S$. Consider strategy $\sigma'_S$ such that $\stratS'(\omega s') = \stratS(\omega s')$ if $(\safe \land \lnot F)(s')$ is satisfiable, and else $\stratS'(\omega s') = \stratS''(\omega s')$, where $\stratS''$ is an arbitrary safety player strategy in $\G$. Let $\rho = s_0s_1\ldots$ be the (unique) play of $\G$ consistent with both $\sigma_R$ and $\sigma'_S$, where $s_0 = s$. Since $\sigma_R$ is winning in $\G$ and $C$ is a necessary subgoal in $\G$, there must exist some $m\in\mathbb{N}$ such that $C(s_m, s_{m+1})$ is valid. Let $m$ be the smallest such index. Since $F \implies C$, we know for all $0 \leq k < m$ that $\lnot F (s_k,s_{k+1})$ holds. Hence, there is the play $\rho' = s_0s_1\ldots s_m \ldots$ in $\G_{pre}$ consistent with $\sigma_S$. The state $s_{m+1}$ is winning for \playerR{} in $\G$, as it is reached on a play consistent with the winning strategy $\sigma_R$. Hence, we know that $F(s_m, s_{m+1})$ holds, because $F$ is the largest sufficient subgoal included in $C$.
  If $(\reach \land F)(s_m, s_{m+1})$ held, we would have that $\pre(\enf(F,\G)(s_m)$ holds: a contradiction with $\rho'$ being consistent with $\sigma_S$, which we assumed to be winning in $\G_{pre}$. It follows that $(\safe \land F)(s_m, s_{m+1})$ holds. We can conclude that $(\safe \land \lnot F)(s_m)$ is unsatisfiable (i.e., $s_m$ is a trap state in $\G_{pre}$), because in all other cases $\playerS$ plays according to $\stratS$, which cannot choose a transition satisfying $F$. However, this implies that  $\pre(\enf(F,\G)(s_m)$ holds, again a contradiction with $\rho'$ being consistent with winning strategy $\sigma_S$.

	\noindent ``$\Longleftarrow$''. If $\playerR$ wins in $\G_{pre}$ they have a strategy $\stratR$ such that every play consistent with $\stratR$ reaches the set $\pre(\enf(F,\G))$.
  As $F$ is a sufficient subgoal, the states $\post(F)$ are winning for \playerR{} by definition.
  It follows by~\Cref{lem:enf} that all states satisfying $\pre(\enf(F,\G))$ are winning in $\G$.
  Combining $\sigma_{R}$ with a strategy that wins in all these states yields a winning strategy for \playerR{} in $\G$.
  \qed
\end{proof}

\section{Causality-Based Game Solving}
\label{sec:gamesolving}

\Cref{lem:slicing} in the preceding section foreshadows how subgoals can be employed in building a recursive approach for the solution of reachability games. 
Before turning to our actual algorithm, we describe a way to symbolically represent nondeterministic memoryless strategies. As discussed in \cite{FarzanK17}, there is no ideal strategy description language for the class of games we consider. Our approach allows us to describe sets of concrete strategies as defined in Section~\ref{sec:prelims} with linear arithmetic formulas.
This framework will prove convenient for \emph{strategy synthesis}, i.e., the computation of winning strategies instead of simply determining the winner of the game.

\subsection{Symbolically Represented Strategies}
We will represent strategies for both players using transition predicates $\sstrat \in \logic(\var \cup \varp)$, henceforth called \emph{symbolic strategies}, where we only require that $(\sstrat \implies (\safe \lor \reach))$ is valid.
A sequence $s_0 \ldots s_n \in S^+$ is called a \emph{play prefix} if it is a prefix of some play in $\G$, $(\neg \goal)(s_j)$ holds for all $0 \leq j \leq n$, and $s_n$ is not a trap state.
We say that a play prefix $\rho = s_0 \ldots s_n$ \emph{conforms to} a symbolic reachability strategy $\sstrat$ if for all $j < n$ we have that $\sstrat(s_j,s_{j+1})$ holds whenever $s_j \in S_{\playerR}$ (and analogously for safety strategies).
A play conforms to~$\sstrat$ if all its play prefixes conform to $\sstrat$.
We say that $\sstrat$ is winning for \playerR{} in $s$ if all plays from $s$ that conform to $\sstrat$ are winning for \playerR{}  and all play prefixes $s_0 \ldots s_n \in S^*S_{\playerR{}}$ from $s$ that conform to $\sstrat$ are such that $(\sstrat \land \reach)(s_n)$ is satisfiable (and analogously for \playerS{}).
The second condition ensures that the player cannot be forced to play a transition outside of $\sstrat$ by their opponent while the play has not reached a trap state or $\goal$, and in particular guarantees the existence of a concrete strategy (as defined in~\Cref{sec:prelims}) conforming to~$\sstrat$.

\begin{lemma}
  If \playerR{} (\playerS{}) has a winning symbolic strategy in $s$, then \playerR{} (\playerS{}) has a concrete winning strategy in $s$.
\end{lemma}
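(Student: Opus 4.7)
The plan is to exhibit a concrete strategy by making a consistent choice, for each play prefix, among the successors allowed by the symbolic strategy, and then argue that every play consistent with this concrete strategy in fact conforms to the symbolic strategy, so that the winning property transfers.

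First I would fix a winning symbolic strategy $\sstrat$ for \playerR{} in $s$ and define a concrete strategy $\stratR : \S^* S_\playerR \to \S$ as follows. For a play prefix $\omega s_n$ with $s_n \in S_\playerR$ that conforms to $\sstrat$ and in which $s_n$ is not a trap state (and where all states before $s_n$ satisfy $\neg\goal$), the assumption that $\sstrat$ is winning in $s$ guarantees that $(\sstrat \land \reach)(s_n)$ is satisfiable, so I can pick some $s' \in \S$ with $(\sstrat \land \reach)(s_n, s')$ valid and set $\stratR(\omega s_n) = s'$; for all other prefixes, $\stratR$ is defined arbitrarily as any valid reachability choice (or left unrestricted if $s_n$ is a trap state). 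Note that if $s_n$ is not a trap state but $(\neg\goal)(s_n)$ fails, the play has already reached $\goal$ and the later choices do not affect winning.

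Next I would prove by induction on $n$ that every play prefix $s_0 s_1 \ldots s_n$ starting in $s_0 = s$ that is consistent with $\stratR$ (in the sense of \Cref{sec:prelims}) and whose intermediate states all satisfy $\neg\goal$ conforms to $\sstrat$. The base case $n = 0$ is immediate. For the step, if $s_n \in S_\playerR$, then by construction $\stratR(s_0 \ldots s_n)$ selects a successor satisfying $\sstrat$; if $s_n \in S_\playerS$, consistency with $\stratR$ imposes no restriction, but the definition of "conforms" in this setting only constrains transitions out of \playerR-controlled states, so the extension still conforms. Taking the union over all $n$, every play from $s$ consistent with $\stratR$ conforms to $\sstrat$. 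Since $\sstrat$ is winning in $s$, all such plays are winning for \playerR{}, hence $\stratR$ is a concrete winning strategy.

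The argument for \playerS{} is completely dual: the satisfiability clause in the definition of a winning symbolic safety strategy yields, for every conforming prefix ending in $S_\playerS$ that is not a trap state, a successor compatible with $\sstrat \land \safe$, from which one defines $\stratS$ by choice. The only mild subtlety is that winning plays for \playerS{} may be infinite and must avoid $\goal$ throughout, but the inductive conformance argument applies verbatim along arbitrary finite prefixes, and an infinite play is consistent with $\stratS$ precisely when all its finite prefixes are. The main, essentially bookkeeping, obstacle is keeping the case analysis (trap state vs.\ non-trap state, \playerR-controlled vs.\ \playerS-controlled, prefix already at $\goal$ or not) straight so that the chosen concrete strategy is a total function satisfying the formal definition from \Cref{sec:prelims} while still inheriting the winning property from $\sstrat$.
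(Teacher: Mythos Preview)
Your proposal is correct and follows essentially the same approach as the paper: define $\stratR$ by choosing, for each conforming play prefix ending in an $S_\playerR$-state, a successor witnessing the satisfiability of $\sstrat \land \reach$, and then argue that plays consistent with $\stratR$ conform to $\sstrat$ and hence are winning. The paper's version is terser (it omits the explicit induction and the case analysis for trap states and $\goal$-states, which are already baked into its definition of ``play prefix''), but the content is the same.
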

\begin{proof}
  Let $\sstrat$ by a symbolic winning strategy for \playerR{}.
  Let $\sigma_R$ be any reachability strategy such that for all play prefixes $\omega s \in S^* S_{\playerR}$ that conform to $\sstrat$ the formula $\sstrat(s,\sigma_R(\omega s))$ is valid. 
  Such a function is guaranteed to exist, as $(\sstrat \land \reach)(s)$ is satisfiable for all such play prefixes by definition.
  Furthermore, $\sigma_R$ is winning as all play prefixes of plays consistent with $\sigma_R$ conform to $\sstrat$, and hence all these plays are winning by assumption. 
  The proof for \playerS{} is analogous.
  \qed
\end{proof}

This representation allows us to specify nondeterministic strategies, but classical memoryless strategies on finite arenas (specified as a function $\sigma\colon S_{\playerR}\to S$ or $\S_{\playerS}\to S$) can also be represented in this form using a disjunction over formulas $\bigwedge_{v\in \var} (v =s(v)\land v' =\sigma(s)(v))$ for varying $s\in S$.

The following lemma shows that a necessary subgoal directly yields a symbolic strategy for \playerS{} if the subgoal is, in a certain sense, \emph{locally avoidable} by \playerS{}. It will be our main tool for synthesizing safety player strategies.
\begin{restatable}{lemma}{avoidingC}
	\label{lem:avoidingC}
	Let $C$ be a necessary subgoal for $\G$ and suppose that $\unsat(\enf(C,\G))$ holds.
	Then, $\safe \land \neg C$ is a winning symbolic strategy for $\playerS$ in $\G$.
\end{restatable}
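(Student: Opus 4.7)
The plan is to verify the two conditions required for $\sstrat = \safe \land \neg C$ to be a winning symbolic strategy for \playerS{}: (i) \emph{well-definedness}, i.e., at every non-trap state $s_n \in S_\playerS$ reached by a conforming play prefix, $(\sstrat \land \safe)(s_n)$ is satisfiable; and (ii) \emph{winning-ness}, i.e., every play conforming to $\sstrat$ is won by \playerS{}. Both parts fall out of carefully unfolding
\[
\enf(C,\G) \;=\; (\safe \lor \reach) \land C \land \neg \exists \varp.(\safe \land \neg C)
\]
under the assumption $\unsat(\enf(C,\G))$.

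The central auxiliary observation I would establish first is that \playerR{} has no $C$-transition available from any state they control. Indeed, fix $s \in S_\playerR$. Since $\safe \implies \neg \reachvar$ is valid, no $\safe$-transition originates from $s$, so $\neg \exists \varp.(\safe \land \neg C)$ holds vacuously at $s$. If some $(s,s')$ satisfied $C$, then $\reach(s,s')$ would hold and consequently $\enf(C,\G)(s,s')$ would be satisfiable, contradicting the assumption.

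For (i), let $s_n \in S_\playerS$ be non-trap. Because $\reach \implies \reachvar$ is valid, only $\safe$-transitions can leave $s_n$, and at least one such transition exists; call it $(s_n, s^*)$. If \emph{every} $\safe$-transition from $s_n$ lay in $C$, then $\neg \exists \varp.(\safe \land \neg C)$ would hold at $s_n$, making $(s_n, s^*)$ a witness for $\enf(C,\G)$, again a contradiction. Hence some $\safe \land \neg C$ transition leaves $s_n$, as required.

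For (ii), consider any play $\rho = s_0 s_1 \ldots$ starting in an initial state and conforming to $\sstrat$: transitions from $S_\playerS$-states satisfy $\neg C$ by conformance, and transitions from $S_\playerR$-states satisfy $\neg C$ by the auxiliary observation, so no transition of $\rho$ lies in $C$. Combined with $\init \land \goal$ being unsatisfiable (the standing assumption of \Cref{sec:subgoals}), \Cref{necessary_subgoal} implies that no index $k$ satisfies $\goal(s_k)$, so $\playerS$ wins $\rho$. The main subtlety is in step (i): one must ensure that the escape transition guaranteed by $\unsat(\enf(C,\G))$ is actually playable by \playerS{} from $s_n$, and this hinges crucially on the partition of transitions via $\reachvar$, which rules out a $\reach$-transition as the escape.
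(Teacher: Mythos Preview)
Your proposal is correct and follows essentially the same approach as the paper's proof: both establish that $\unsat(\enf(C,\G))$ forces (a) every state controlled by \playerR{} to have no outgoing $C$-transition, and (b) every non-trap state controlled by \playerS{} to have an outgoing $\safe \land \neg C$-transition, from which the two defining conditions of a winning symbolic strategy follow via the necessity of $C$. Your presentation is slightly more explicit about the role of $\reachvar$ in ensuring the escape transition in~(i) is indeed a $\safe$-move, but the argument is the same.
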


\subsection{A Recursive Algorithm}

\begin{algorithm}
	\caption{$\algreach(\G)$}
	\label{alg:algreach}
	\SetKwInOut{Input}{In}
	\SetKwInOut{Output}{Out}
	\Input{ reachability game \game{}}
	\Output{ triple $(R, \sstrat_{\playerR}, \sstrat_{\playerS})$ s.t.
		\begin{itemize}
			\item $\mathit{R}\in\logic(\var)$ represents the set of initial states winning for \playerR{};
			\item $\sstrat_{\playerR}$ is a winning symbolic reachability strategy  for states in $R$;
			\item $\sstrat_{\playerS}$ is a winning symbolic safety strategy for states in $\init \land \neg R$.
		\end{itemize}}
	\Begin{
		$\mathit{R} \leftarrow \init \land \goal$\\
		$\mathit{I} \leftarrow \init \land \lnot\goal$\\
		\If{$\unsat(\mathit{I})$}{\KwRet{$\mathit{R}, {\normalfont \false}, \normalfont \false$ \label{line:ret1}}}
		$\varphi \leftarrow  \interpol(\mathit{I},\goal)$\label{line: phi}\\
		$\mathit{C} \leftarrow \instantiate(\varphi,\G)$\label{line: subgoal}\\
		\If{$\unsat(\enf(\mathit{C},\G))$\label{line: cpre}}{\KwRet{$\mathit{R}, {\normalfont \false},\safe \land \neg C$}\label{line:safety wins}}
		$\G_{post} \leftarrow \langle \post(C)\subsm, \safe \land \varphi, \reach \land \varphi, \goal \rangle$\label{line:gpost}\\
		$\mathit{R}_{post},\sstrat_{\playerR{}}^{post}, \sstrat_{\playerS{}}^{post} \leftarrow \algreach \big(\G_{post}\big)$\label{line: recursion1}\\
		$\mathit{F} \leftarrow C \land \mathit{R}_{post}\subsp$\\
		\If{$\unsat(\enf(\mathit{F},\G))$\label{line:safeavoidsE}}{\KwRet{$\mathit{R}, {\normalfont \false}, \safe \land \neg F \land (\varphi \implies \sstrat_{\playerS{}}^{post})$}\label{line:thirdreturn}}
		\If{$\sat((\reach \lor \safe) \land \varphi \land \lnot \varphi' \land \lnot \goal)$\label{line:transback}}{
			$\mathit{F} \leftarrow \mathit{F} \lor \goal \subsp$\label{line: E2}\\
			$\varphi \leftarrow \false$\label{line:tpostfalse}
		}
		$\G_{pre} \leftarrow \langle I,\safe \land \lnot \mathit{F},\reach \land \lnot \mathit{F}, \pre(\enf(\mathit{F},\G))\rangle$\label{line:gpre2}\\
		$\mathit{R}_{pre}, \sstrat_{\playerR{}}^{pre}, \sstrat_{\playerS{}}^{pre} \leftarrow \algreach \big(\G_{pre}\big)$\label{line: recursion2}\\
    \KwRet{$\mathit{R} \lor \mathit{R}_{pre}$, \label{line: last return}\\ $\qquad \quad \;\; \combine(\sstrat_{\playerR{}}^{pre},F, \sstrat_{\playerR{}}^{post})$\label{line:reachstrat}, \\$\qquad \quad \;\;(\neg \varphi \implies \sstrat_{\playerS{}}^{pre}) \land (\varphi \implies \sstrat_{\playerS{}}^{post})$\label{line:safetystrat}}
	}
\end{algorithm}

We now describe our algorithm which utilizes necessary subgoals to decompose and solve two-player reachability games (Algorithm~\ref{alg:algreach}).
It is incomplete in the sense that it does not return on every input (\Cref{sec:termination} discusses special cases with guaranteed termination).
If the algorithm returns on input $\G$, it returns a triple $(R,\sstrat_{\playerR{}},\sstrat_{\playerS{}})$, where (1) $R$ is a state predicate characterizing the initial states that are winning for \playerR{} in $\G$, (2) $\sstrat_{\playerR{}}$ is a symbolic strategy for \playerR{} that wins in all initial states satisfying $R$, and (3) $\sstrat_{\playerS{}}$ is a symbolic strategy for \playerS{} that wins in all initial states satisfying $\init \land \neg R$.
The returned safety strategy $\sstrat_{\playerS{}}$ is such that $\neg \sstrat_{\playerS{}}$ is a necessary subgoal that \playerS{} can avoid locally in the game $\G$ restricted to intial states $\init \land \neg R$ (see~\Cref{lem:avoidingC}).

Algorithm~\ref{alg:algreach} works as follows.
States satisfying $\init$ and $\goal$ are immediately winning for $\playerR$ and thus always part of the returned formula $R$.
Following the discussion at the beginning of Section~\ref{sec:subgoals}, further analysis considers the game starting in the remaining initial states $I = \init \land \lnot \goal$.
If there is no such state, we may return that all initial states are winning (line~\ref{line:ret1}).
Here, \playerR{} wins from $R$ without playing any move, and hence $\sstrat_{\playerR{}} = \false$ is a valid winning symbolic strategy (winning symbolic strategies are only required to provide moves in prefixes that have not seen $\goal$ so far).
We may choose $\sstrat_{\playerS{}}$ arbitrarily as there is no initial state winning for $\playerS{}$.

If the algorithm does not return in line~\ref{line:ret1}, a necessary subgoal $C$ between $I$ and $\goal$ is computed by instantiating a Craig interpolant $\varphi$ for the two predicates (lines \ref{line: phi} and \ref{line: subgoal}, see also \Cref{prop_necessary}).
We break up the remaining description of the algorithm into three parts, which correspond to the main cases that occur when splitting the game along the subgoal $C$.

\subsubsection{Case 1: \playerS{} can avoid the subgoal $C$.}
If the necessary subgoal $C$ qualifies for~\Cref{lem:avoidingC}, we can immediately conclude that \playerS{} is winning for all states statisfying $I$ (lines~\ref{line: cpre} and~\ref{line:safety wins}).
An instance of this case occurs if the interpolant describes a bottleneck in the game which is fully controlled by \playerS{}.
The winning symbolic reachability strategy is $\safe \land \neg C$ in this case (line~\ref{line:safety wins}), and we will assume that safety strategies returned by recursive calls of the algorithm are essentially negations of necessary subgoals that can be avoided by \playerS{}. 

If Lemma \ref{lem:avoidingC} is not applicable, we next find those transitions in $C$ that move into a winning state for the safety player.
This is achieved by analyzing the \emph{post-game} (line~\ref{line:gpost}):
\[\G_{post} = \langle \post(C)\subsm, \safe \land \varphi, \reach \land \varphi, \goal \rangle.\]

Its initial states are exactly the states one sees after bridging the subgoal $C$.
In order to make sure that $\G_{post}$ is, in some sense, easier to solve than $\G$, we restrict both $\safe$ and $\reach$ to $\varphi$, which is the interpolant used to compute the subgoal $C$.
This has the effect of removing all transitions in states \emph{not} satisfying~$\varphi$, making them trap states.
For the safety player this makes $\G_{post}$ easier to win than $\G$ as all plays ending in such a trap state without seeing $\goal$ before are winning for \playerS{} in $\G_{post}$.
Hence we formally have:
\begin{restatable}{lemma}{gpost}
  \label{lem:gpost}
  If $\sstrat$ is a winning symbolic reachability strategy from $s$ in $\G_{post}$, then $\sstrat$ is also winning from $s$ in $\G$.
\end{restatable}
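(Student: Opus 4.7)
My plan is to show that winning plays in $\G_{post}$ and winning plays in $\G$ conforming to $\sstrat$ coincide, by proving that no play conforming to $\sstrat$ from $s$ can leave the set of states satisfying $\varphi$ without reaching $\goal$. The key observations underlying this are: (a) initial states of $\G_{post}$ satisfy $\varphi$, because $C = \instantiate(\varphi,\G)$ entails $\varphi'$, so $\post(C)\subsm \implies \varphi$; (b) $\sstrat$ is a valid symbolic strategy in $\G$ since transitions of $\G_{post}$ are a subset of those of $\G$; and (c) $\goal \implies \varphi$, a defining property of the Craig interpolant.

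First I would verify that $\sstrat$ satisfies the defining conditions of a symbolic strategy in $\G$. Since $\sstrat \implies (\safe \land \varphi) \lor (\reach \land \varphi)$ by assumption, it entails $\safe \lor \reach$, as required.

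The main step is the following claim: every play $\rho = s_0 s_1 \ldots$ in $\G$ from $s_0 = s$ that conforms to $\sstrat$ remains in states satisfying $\varphi$ up to (and including) the first position, if any, where $\goal$ holds. I would prove this by contradiction. Suppose some position along $\rho$ fails $\varphi$ before any state satisfies $\goal$; let $k$ be minimal with $s_k \models \lnot\varphi$. Then $s_0,\ldots,s_{k-1}$ all satisfy $\varphi$ and none satisfies $\goal$ (as $\goal \implies \varphi$). Hence each transition $s_i \to s_{i+1}$ for $i < k$ satisfies the transition relation of $\G_{post}$, namely $(\safe \land \varphi) \lor (\reach \land \varphi)$. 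Moreover, $s_k$ is a trap state in $\G_{post}$: it satisfies $\lnot\varphi$, so no transition of the form $(\safe \land \varphi) \lor (\reach \land \varphi)$ originates from $s_k$. Thus $s_0 \ldots s_k$ is a finite play in $\G_{post}$; since conformance only restricts moves from \playerR{}-states, this prefix also conforms to $\sstrat$ in $\G_{post}$. But $s_k \not\models \goal$, so it is losing for \playerR{} in $\G_{post}$, contradicting that $\sstrat$ wins from $s$ there.

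With this claim in hand, the remainder is routine. Any play in $\G$ from $s$ conforming to $\sstrat$ either reaches $\goal$ (done), or stays in $\varphi$-states and never reaches a trap state of $\G$; in the latter case, it is also a play in $\G_{post}$ conforming to $\sstrat$, so by the winning assumption in $\G_{post}$ it reaches $\goal$. Finally, for the second condition of winning symbolic strategies, any play prefix in $\G$ from $s$ conforming to $\sstrat$ and ending in a non-goal \playerR{}-state $s_n$ is, by the same argument, a play prefix in $\G_{post}$; thus $(\sstrat \land (\reach \land \varphi))(s_n)$ is satisfiable, which implies $(\sstrat \land \reach)(s_n)$ is satisfiable. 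The main obstacle is purely the contradiction argument in the claim above, specifically the careful bookkeeping that $s_k$ really is a trap state of $\G_{post}$ (rather than of $\G$) and that the truncated prefix is a legal play in $\G_{post}$; everything else follows by direct translation between the two games.
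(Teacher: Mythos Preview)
Your proof is correct and follows essentially the same approach as the paper: both arguments show that any play of $\G$ from $s$ conforming to $\sstrat$ that fails to reach $\goal$ yields a losing play in $\G_{post}$, by identifying the first position where $\varphi$ fails and noting that this state is a trap state of $\G_{post}$. Your version is organized around an explicit intermediate claim (the play stays in $\varphi$ until $\goal$), whereas the paper argues the contradiction directly, but the underlying reasoning and the key observations ($\post(C)\subsm \implies \varphi$, $\goal \implies \varphi$, and that transitions of $\G_{post}$ embed into $\G$) are identical.
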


Due to the restriction to $\varphi$, intuitively \playerR{} wins from a state $s$ in $\G_{post}$ if they can win from $s$ in $\G$ \emph{while staying inside the interpolant} $\varphi$.
In other words, \playerR{} must guarantee that the necessary subgoal $C$ is not visited again in the play.
Still, the set $R_{post}$, as returned in line \ref{line: recursion1} by the recursive call to Algorithm~\ref{alg:algreach} on $\G_{post}$, is a sufficient subgoal in $\G$, by the above lemma.
Furthermore, if \playerS{} can avoid all states satisfying $R_{post}$ (see line~\ref{line:safeavoidsE}), then this also implies a winning strategy from all initial states in $I$. 
The reason is that \playerR{} can only win by eventually visiting a state from which they can win without leaving $\varphi$ again, as $(\goal \implies \varphi)$ is valid.
This is not possible if \playerS{} can avoid all states in $R_{post}$. 

In this case we construct $\sstrat_{\playerS{}}$ as follows.
We assume that $\neg \sstrat_{\playerS{}}^{post}$ is a necessary subgoal that can be locally avoided in $\G_{post}$ from all states satisfying $\post(C)\subsm \land \neg R_{post}$, and furthermore, we know that $F := C \land R_{post}\subsp$ can be locally avoided in $\G$ (line~\ref{line:safeavoidsE}).
Intuitively, playing according to $\sstrat^{post}_{\playerS{}}$ in $\G_{post}$ yields a strategy for \playerS{} which avoids $\goal$ and may move back into a state satisfying $\neg \varphi$, which forces \playerR{} to bridge the subgoal $C$ again in order to win.
It follows that $F \lor (\varphi \land \neg \sstrat_{\playerS{}}^{post})$ is a necessary subgoal from $I$ that can be locally avoided by \playerS{} in $\G$, and the corresponding symbolic strategy is $\safe \land \neg F \land (\varphi \implies \sstrat_{\playerS{}}^{post})$ (we additionally intersect the negated necessary subgoal with $\safe$ to ensure that the symbolic strategy only includes legal transitions).

\medskip
So far, the subgoal was such that \playerS{} could avoid it entirely, or at least avoid all states from which \playerR{} would win when forced to remain inside the post-game.
If this is not the case, then we also need to consider the \emph{pre-game} (line~\ref{line:gpre2}):
\[\G_{pre} = \langle I,\safe \land \lnot \mathit{F},\reach \land \lnot \mathit{F}, \pre(\enf(\mathit{F},\G))\rangle.\]
which intuitively describes the game before bridging the interpolant $C$ for the last time.
The exact definition of $F$ will depend on whether $C$ \emph{perfectly partitions} the game or not.
In both cases $F$ will be the largest sufficient subgoal contained in a necessary subgoal, which lets us apply~\Cref{lem:slicing} to conclude that the initial winning regions of $\G$ and $\G_{pre}$ coincide.

\subsubsection{Case 2: The subgoal perfectly partitions $\G$.}
We say that $\varphi$ \emph{perfectly partitions} $\G$ if $(\reach \lor \safe) \land \varphi \land \neg \varphi' \land \neg \goal$ is unsatisfiable (cf. line~\ref{line:transback}).
Intuitively, this means that there is no transition that ``undoes'' the effect of the subgoal~$C$.
If this holds, then the restriction of $\G_{post}$ to states satisfying $\varphi$ is \emph{de facto} no longer a restriction, as no play can reach such a state anyway after passing through the subgoal.
This intuition is formalized by the following lemma.
\begin{restatable}{lemma}{perfectpartition}
  Assume that $\varphi$ \emph{perfectly partitions} $\G$, and let $s$ be a state satisfying $\post(C)\subsm$. Then \playerR{} wins from $s$ in $\G_{post}$ if and only if \playerR{} wins from $s$ in~$\G$.
\end{restatable}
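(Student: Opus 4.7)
The forward implication follows directly from~\Cref{lem:gpost}, which we already established.

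For the converse direction, the plan is to extract a \emph{trapping invariant} from the perfect partition hypothesis: every play of $\G$ starting in a $\varphi$-state remains within $\varphi$ at least until $\goal$ is first reached. A straightforward induction on the play position proves this. Whenever $\varphi(s_i) \land \neg \goal(s_i)$ is valid and $(\safe \lor \reach)(s_i, s_{i+1})$ holds, the pair $(s_i, s_{i+1})$ would witness satisfiability of $(\reach \lor \safe) \land \varphi \land \neg \varphi' \land \neg \goal$ unless $\varphi(s_{i+1})$ is valid, contradicting the perfect partition assumption. The base case is secured because $C \implies \varphi'$ (by construction of $C = \instantiate(\varphi,\G)$), so $\post(C)\subsm \implies \varphi$, and hence $\varphi(s)$ is valid for the starting state.

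Given a winning reachability strategy $\sigma_R$ for \playerR{} in $\G$ from $s$, I will show that the same $\sigma_R$ is also winning in $\G_{post}$ from $s$. Every transition of $\G_{post}$ is a transition of $\G$ (the extra $\land \varphi$ is only a restriction on the source state), so a priori any play of $\G_{post}$ consistent with $\sigma_R$ lifts to a play of $\G$ consistent with $\sigma_R$. Conversely, the moves proposed by $\sigma_R$ in any $\sigma_R$-consistent play of $\G_{post}$ from $s$ are valid in $\G_{post}$, because the trapping invariant guarantees that their source state satisfies $\varphi$ at every position before $\goal$ is reached. Hence every $\sigma_R$-consistent play of $\G_{post}$ from $s$ is a $\sigma_R$-consistent play of $\G$ from $s$, and therefore reaches $\goal$ by assumption on $\sigma_R$, showing $\sigma_R$ to be winning in $\G_{post}$.

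The main subtlety I anticipate is the trap-state analysis: states violating $\varphi$ are trap states in $\G_{post}$ even when they are not in $\G$, which could in principle cause a $\G_{post}$-play to terminate prematurely without visiting $\goal$ and thus be winning for \playerS. The trapping invariant precisely eliminates this risk for plays consistent with $\sigma_R$, since such plays cannot leave $\varphi$ before $\goal$ is reached, and all $\varphi$-states have identical trap status in both games. With this in place, $\G$ and $\G_{post}$ behave identically on all plays relevant to the winning strategy, and the lemma follows; the argument for symbolic strategies is analogous, using the correspondence from the preceding subsection.
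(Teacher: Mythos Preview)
Your proof is correct and rests on the same key observation as the paper---the trapping invariant that a play starting in a $\varphi$-state cannot leave $\varphi$ without first visiting $\goal$. The backward direction differs in structure: you argue \emph{directly} that the $\G$-winning strategy $\sigma_R$ is itself winning in $\G_{post}$, whereas the paper proceeds by contradiction, assuming \playerS{} has a winning strategy $\sigma_S$ in $\G_{post}$, extending $\sigma_S$ to a $\G$-strategy $\sigma_S'$ (playing arbitrarily outside $\varphi$), and deriving a contradiction from the unique play consistent with both $\sigma_R$ and $\sigma_S'$. Your direct route avoids the implicit appeal to determinacy and is arguably cleaner; the paper's route has the mild advantage that only a single play needs to be analyzed rather than all $\sigma_R$-consistent plays of $\G_{post}$. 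One small point: \Cref{lem:gpost} as stated concerns \emph{symbolic} strategies, so the forward direction strictly requires the straightforward adaptation to concrete strategies that the paper itself flags explicitly.
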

It follows that $F = C \land R_{post} \subsp$ is the largest sufficient subgoal included in~$C$.
By~\Cref{lem:slicing}, the same initial states are winning for \playerR{} in $\G_{pre}$ and in~$\G$. 
In this case, we construct the desired safety strategy (line~\ref{line:safetystrat}) as 
\[\sstrat_{\playerS{}} = (\neg \varphi \implies \sstrat_{\playerS{}}^{pre}) \land (\varphi \implies \sstrat_{\playerS{}}^{post}),\]
where $\neg \sstrat_{\playerS{}}^{pre/post}$ are assumed to be necessary subgoals avoidable by \playerS{} in the corresponding subgames. 
Intuitively, the combined strategy consists of following $\sstrat_{\playerS}^{pre}$ as long as one remains in the pre-game, which, by induction hypothesis, allows \playerS{} to avoid all transitions from $F$ if starting in $R_{pre}$. 
If the play crosses $C \land \neg F$, the strategy is to play according to the winning strategy of the post-game.

A symbolic strategy for \playerR{} can be given by combining pre- and post-strategies as follows (line~\ref{line:reachstrat}):
\begin{align*}
  \combine(\sstrat_{\playerR{}}^{pre},F, \sstrat_{\playerR{}}^{post}) := \;\;&(\pre(\sstrat_{\playerR{}}^{post}) \implies \sstrat_{\playerR{}}^{post}) \\
  \land \;\; &((\neg \pre(\sstrat_{\playerR{}}^{post}) \land \pre(F)) \implies F) \\
  \land \;\; &((\neg \pre(\sstrat_{\playerR{}}^{post}) \land \neg \pre(F)) \implies \sstrat_{\playerR{}}^{pre}) \\
  \land \;\; &(\pre(\sstrat_{\playerR{}}^{post}) \lor \pre(F) \lor \pre(\sstrat_{\playerR{}}^{pre})).
\end{align*}
This represents a nested conditional strategy that prefers the strategies of the subgames in the priority order $\sstrat_{\playerR{}}^{post}, F$, and finally $\sstrat_{\playerR{}}^{pre}$.
The reason for this order is that the winning condition in the post-game coincides with the global winning objective (to reach $\goal$), while in the pre-game \playerR{} tries to reach a winning state in the post-game.
The set $F$ is exactly the bridge between these two.
The last condition makes sure that the strategy only includes transitions of states in which it is winning.

\subsubsection{Case 3: The subgoal does not perfectly partition $\G$.}

If $\sat((\reach \lor \safe) \land \varphi \land \lnot \varphi' \land \neg \goal)$ is true in line~\ref{line:transback}, we can no longer assume that $F$ is the largest sufficient subgoal in $C$.
The reason is that \playerS{} may win in $\G_{post}$ by moving out of the subgame, but if this move leads to a winning state for \playerR{} in $\G$, then such a strategy is winning in $\G_{post}$, but not in $\G$.
So we can only assume that $F$ is sufficient (this follows by~\Cref{lem:gpost}).
In order to apply~\Cref{lem:slicing} we extend~$F$ by all transitions that move directly into $\goal$ (line~\ref{line: E2}). 
This immediately yields a necessary and sufficient subgoal, and so again~\Cref{lem:slicing} applies to $\G_{pre}$ (line~\ref{line:gpre2}).
We could have also added $\goal$-states to $F$ in Case 2, but we have observed that not doing so improves the performance of our procedure considerably. 

The reachability strategy is composed of $\sstrat_{\playerR{}}^{pre}, F$, and $\sstrat_{\playerR{}}^{post}$ exactly as in Case 2 (line~\ref{line:reachstrat}).
As all transitions in $F$ are losing for \playerS{}, and these are the only ones that are removed in $\G_{pre}$, essentially \playerS{} can play using the same strategies in $\G$ and $\G_{pre}$.
We implement this by setting $\varphi$ to $\false$ (line~\ref{line:tpostfalse}), in which case $\sstrat_{\playerS}$ (line~\ref{line:safetystrat}) equals 
$(\true \implies \sstrat_{\playerS{}}^{pre}) \land (\false \implies \sstrat_{\playerS}^{post}) \equiv \sstrat_{\playerS{}}^{pre}$.

\medskip
Finally, we formally state the partial correctness of the algorithm, using the ideas outlined above. 
The proof can be found in the appendix.

\begin{restatable}[Partial correctness]{theorem}{partialcorrectness}
  \label{thm:partcorr}
  If $\algreach(\G)$ returns $(R,\sstrat_{\playerR{}},\sstrat_{\playerS{}})$, then
  \begin{itemize}
  \item $R$ characterizes the set of initial states that are winning for $\playerR{}$ in $\G$,
  \item $\sstrat_{\playerR{}}$ is a winning symbolic reachability strategy from $R$,
  \item $\sstrat_{\playerS{}}$ is a winning symbolic safety strategy from $\init \land \neg R$.
  \end{itemize}
\end{restatable}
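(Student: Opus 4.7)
The plan is to prove Theorem \ref{thm:partcorr} by well-founded induction on the recursion depth of $\algreach(\G)$, handling each possible return point of Algorithm~\ref{alg:algreach} in turn. Since the claim is about partial correctness, I do not need to argue termination; I may assume that whenever a recursive call returns, its output satisfies the three properties listed in the theorem.

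First I would dispatch the two non-recursive returns. For the return in line~\ref{line:ret1}, the precondition $\unsat(I)$ yields that $\init \implies \goal$, so every initial state is trivially winning for \playerR{} without any moves, and the symbolic strategies $\false$ vacuously satisfy the conditions (no prefix that conforms to them and has not seen $\goal$ exists). For the return in line~\ref{line:safety wins}, the subgoal $C$ is necessary by \Cref{prop_necessary} and $\unsat(\enf(C,\G))$ holds, so \Cref{lem:avoidingC} directly produces the claimed symbolic safety strategy; the initial states winning for \playerR{} are exactly $\init \land \goal$ because \playerS{} wins from every state in $I$.

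Next I would handle the return in line~\ref{line:thirdreturn}. By the induction hypothesis applied to $\G_{post}$, the set $R_{post}$ correctly describes the initial winning states for \playerR{} in $\G_{post}$, and $\neg \sstrat^{post}_{\playerS{}}$ can be read as a necessary subgoal of $\G_{post}$ locally avoidable from $\post(C)\subsm \land \neg R_{post}$. Setting $F = C \land R_{post}\subsp$ and using the hypothesis $\unsat(\enf(F,\G))$, I would argue that $F \lor (\varphi \land \neg \sstrat^{post}_{\playerS{}})$ is a necessary subgoal in $\G$ from $I$ that \playerS{} can locally avoid: any play reaching $\goal$ must cross $C$, and if the crossing does not satisfy $F$ then the successor lies in $\post(C)\subsm \land \neg R_{post}$, from which \playerR{} cannot win inside $\G_{post}$ and hence -- using \Cref{lem:gpost} combined with the fact that the play must remain inside $\varphi$ until it either exits via $\neg \varphi'$ (impossible without passing through $\goal$, as that would produce a new $C$-transition and start a fresh argument) or reaches $\goal$ -- must cross $\neg \sstrat^{post}_{\playerS{}}$ while $\varphi$ holds. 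This verifies the safety strategy $\safe \land \neg F \land (\varphi \implies \sstrat^{post}_{\playerS{}})$, and gives $R = \init \land \goal$ for the initial winning region of \playerR{}.

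The main obstacle is the final return in line~\ref{line: last return}, where both subgames contribute. I would split this into Case 2 (perfect partition) and Case 3. In Case 2 the perfect partition lemma ensures that $R_{post}$ captures exactly the \playerR{}-winning successors of $C$ in $\G$, so $F = C \land R_{post}\subsp$ is the largest sufficient subgoal inside $C$ and \Cref{lem:slicing} identifies the initial \playerR{}-winning regions of $\G$ and $\G_{pre}$. In Case 3 the augmentation $F \leftarrow F \lor \goal\subsp$ combined with \Cref{lem:gpost} makes $F$ both sufficient and necessary (any path to $\goal$ either passes the original $F$ or a final $\goal\subsp$-transition), so \Cref{lem:slicing} again applies. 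For the symbolic reachability strategy, I would verify the four conjuncts in $\combine$: the first two ensure winning moves are chosen in states where $\sstrat^{post}_{\playerR{}}$ or $F$ is enabled; the third delegates to the pre-strategy elsewhere; the last is the non-emptiness clause needed in the symbolic strategy definition from \Cref{sec:prelims}, and it holds on every prefix conforming to the combined strategy because every such prefix lies in the winning region of one of the subgames, with priority respecting the fact that post-game victories subsume pre-game victories. For the safety strategy, in Case 2 the formula $(\neg \varphi \implies \sstrat^{pre}_{\playerS{}}) \land (\varphi \implies \sstrat^{post}_{\playerS{}})$ encodes playing the pre-strategy until $C$ is bridged and switching to the post-strategy afterwards; perfect partition guarantees no return to $\neg \varphi$, so the two halves do not interfere. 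In Case 3 the substitution $\varphi \leftarrow \false$ collapses the disjunction to $\sstrat^{pre}_{\playerS{}}$, which is correct because \Cref{lem:slicing} together with the inductive hypothesis on $\G_{pre}$ already produces a winning symbolic safety strategy for \playerS{} from $\init \land \neg(R \lor R_{pre})$ in $\G$ itself (any \playerR{}-winning move in $\G$ not contained in $\G_{pre}$ lies in $F$, hence is already forbidden for \playerS{}). Carefully checking that the combined symbolic strategies indeed satisfy the non-emptiness clause on every conforming prefix is the subtlest step, and I would do it by induction on prefix length together with the invariant that whenever the prefix is inside the respective subgame's winning region, the recursively obtained strategy supplies an outgoing move.
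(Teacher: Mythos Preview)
Your overall architecture---induction on the recursion depth with a case split over the return points---is exactly what the paper does. The gap is that the three bullets of the theorem are too weak to serve as the induction hypothesis: they do not carry enough structure about the returned strategies to push the inductive step through in the recursive cases. The paper therefore proves a strictly stronger statement, namely that
\begin{enumerate}
  \item[(a)] $\sstrat_{\playerR{}}$ is winning for \playerR{} from every state satisfying $R \lor \pre(\sstrat_{\playerR{}})$ (not just from $R$),
  \item[(b)] $\neg \sstrat_{\playerS{}}$ is a necessary subgoal in $\langle \init \land \neg R, \safe, \reach, \goal\rangle$, and
  \item[(c)] $\unsat(\enf(\neg \sstrat_{\playerS{}},\G))$ holds whenever $\init \land \neg R$ is satisfiable,
\end{enumerate}
and only afterwards deduces the theorem via \Cref{lem:avoidingC}.

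You implicitly use these strengthenings without establishing them. Two concrete places where the weak hypothesis fails: First, in the $\combine$ case the first conjunct $(\pre(\sstrat_{\playerR{}}^{post}) \implies \sstrat_{\playerR{}}^{post})$ may be triggered at a state that satisfies $\pre(\sstrat_{\playerR{}}^{post})$ but does \emph{not} lie in $R_{post}$ (for instance a $\varphi$-state that is not in $\post(C)\subsm$, or one reached after crossing $C \land \neg F$). Your hypothesis only guarantees that $\sstrat_{\playerR{}}^{post}$ wins from $R_{post}$, so you cannot conclude the combined strategy wins once it switches; property~(a) is precisely what closes this. Second, in the return at line~\ref{line:thirdreturn} you write that ``$\neg \sstrat_{\playerS{}}^{post}$ can be read as a necessary subgoal locally avoidable from $\post(C)\subsm \land \neg R_{post}$'', but this is not a consequence of ``$\sstrat_{\playerS{}}^{post}$ is a winning symbolic safety strategy''---it is the content of (b) and (c). The paper also needs an auxiliary lemma (their Lemma on combining enforceability, used to merge two avoidable subgoals into one) to verify (c) for the composite $\sstrat_{\playerS{}}$, which your sketch does not mention. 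Once you load the induction with (a)--(c), each of your case arguments goes through essentially as written.
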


\begin{remark}[Simulating the attractor]
  \label{rem:attractor}
	Note that Craig interpolants are by no means unique. If we choose the interpolation function so that $\interpolate(I,\goal)$ always returns $\goal$ (this is a valid interpolant), Algorithm~\ref{alg:algreach} essentially simulates the attractor.
  In this case the subgoal $C$ (line~\ref{line: subgoal}) contains exactly the transitions that move directly into $\goal$.
  All states in $\post(C)\subsm$ are then winning for \playerR{} and hence $R_{post}$ would be equivalent to $\post(C)\subsm$, which implies that $C \equiv F$ holds in this case.
  The new goal states in $\G_{pre}$ are set to $\pre(\enf(F,\G))$, which are exactly the states in $\pre(C)$ that either are controlled by \playerR{}, or such that all their transitions are included in $F$.
  Hence the set $\pre(\enf(F,\G))$ is exactly the classical controlled predecessor.
\end{remark}

One effect of slicing the game along general subgoals is that the initial predicate of the post-game (which describes all states satisfying the post-condition of the subgoal) may be satisfied by many states that do not necessarily need to be considered in order to decide who wins from the initial states of $\G$ (for example, because they are not reachable from any initial state, or cannot reach $\goal$).
This can be a drawback if the (superfluous) size of the subgames makes them hard to solve.
Notably, this is in general less of an issue for approaches based on unrolling of the transition relation: 
The method of solving increasingly large step-bounded games~\cite{FarzanK17} will only consider states that are reachable from $\init$, while backwards fixpoint computations will not explore states that do not reach $\goal$.
 A way of coping with this is to provide additional information on the domains of variables, whenever this is available (we discuss the effect of bounding variable domains in \Cref{sec:experiments}).
Indeed, in the case where all variable domains are finite, Algorithm~\ref{alg:algreach} is guaranteed to terminate, as shown in the next subsection.

\subsection{Special Cases with Guaranteed Termination}
\label{sec:termination}

Deciding the winner in the types of games we consider is generally undecidable (see~\cite{FarzanK17} for the case that $\logic$ is linear real arithmetic).
Since Algorithm \ref{alg:algreach} returns a correct result whenever it terminates, this implies that it cannot always terminate.
In this section, we give two important cases in which we can prove termination. 
The proofs can be found in the appendix.

\begin{restatable}{theorem}{terminationfinite}
  If the domains of all variables in $\G$ are finite, then $\algreach(\G)$ terminates.
\end{restatable}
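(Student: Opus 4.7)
\medskip

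\noindent\textbf{Proof plan for the termination theorem.}
The plan is to exhibit a well-founded termination measure $\mu(\G)$ on games that strictly decreases along each recursive call of $\algreach$, and to observe that all non-recursive steps terminate when variable domains are finite. Concretely, I would set
\[
  \mu(\G) \;=\; |\{(s,s')\in S\times S \mid (\safe\lor\reach)(s,s')\text{ is valid}\}|,
\]
i.e., the number of satisfiable transitions of the game. Under the assumption that every variable has a finite domain, $S$ is finite, hence $\mu(\G)\in\mathbb{N}$. Since the algorithm is syntactically at most binary-recursive (one call on $\G_{post}$ and possibly one on $\G_{pre}$), strictly decreasing $\mu$ along both recursive branches will yield a finite recursion tree and hence termination.

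The first key step is to show $\mu(\G_{post})<\mu(\G)$. The game $\G_{post}$ restricts both transition predicates by conjoining them with $\varphi\in\logic(\var)$, so its transition relation is $(\safe\lor\reach)\land\varphi$, which keeps exactly those transitions in $\G$ whose source satisfies $\varphi$. The recursion on $\G_{post}$ (line~10) is reached only after the check in line~7 fails, i.e., $\enf(C,\G)$ is satisfiable. Since $\enf(C,\G)\implies C \implies \neg\varphi\land\varphi'$, there is at least one satisfiable transition of $\G$ whose source violates $\varphi$; this transition is dropped in $\G_{post}$, giving the strict inequality. The second key step is $\mu(\G_{pre})<\mu(\G)$. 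The recursion on $\G_{pre}$ (line~18) is only reached if the check in line~13 fails, i.e., $\enf(F,\G)$ is satisfiable, so in particular $F$ has at least one satisfiable transition. In Case~3 the predicate $F$ is enlarged on line~15, which can only add transitions and preserves satisfiability of $\enf(F,\G)$. Since $\G_{pre}$ restricts the transition relation by $\neg F$, the strict drop follows.

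It then remains to verify that the non-recursive work in a single invocation always terminates. Under finite domains, $\sat$ and $\unsat$ reduce to a finite enumeration over assignments, Craig interpolation in line~6 is known to be computable for decidable theories with the interpolation property (and trivially so via truth-table/DNF arguments on finitely many states), and the quantifier eliminations implicit in $\pre$, $\post$, and $\instantiate$ reduce to finite disjunctions. Hence every line that does not recurse runs in finite time. Combining this with the strictly decreasing $\mu$ along both possible recursive calls, a straightforward induction on $\mu(\G)$ shows that $\algreach(\G)$ returns in finitely many steps.

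The main obstacle I anticipate is the case distinction around line~14: one must check that after modifying $F$ to $F\lor\goal\subsp$ and setting $\varphi$ to $\false$, the game $\G_{pre}$ built on line~18 still uses an $F$ whose set of satisfiable transitions is non-empty, so that the strict decrease $\mu(\G_{pre})<\mu(\G)$ survives this rewriting. Everything else is a clean monotonicity argument, once the measure $\mu$ is fixed.
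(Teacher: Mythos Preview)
Your proposal is correct and essentially identical to the paper's proof: the paper defines exactly the same measure $\size(\G)=|\{(s,s')\mid (\safe\lor\reach)(s,s')\text{ valid}\}|$ and argues the strict decrease for $\G_{post}$ and $\G_{pre}$ using the same observations (satisfiability of $\enf(C,\G)$ yields a $\neg\varphi$-sourced transition dropped in $\G_{post}$; satisfiability of $\enf(F,\G)$, preserved under the extension in line~\ref{line: E2}, yields an $F$-transition dropped in $\G_{pre}$). Your additional remarks on termination of the non-recursive steps and your flagged ``obstacle'' are fine and already handled by the monotonicity of the extension of $F$; the paper simply takes these for granted.
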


\begin{remark}[Time complexity]
  The termination argument given in the appendix yields a single-exponential upper bound on the runtime of the algorithm, where the input size is measured in the number of concrete transitions of the game.
  This is because in both recursive calls the subgames may be ``almost'' as large as the input -- they are only guaranteed to have at least one concrete transition less.
\end{remark}

We now show that, under certain assumptions, our algorithm also terminates for games that have a finite bisimulation quotient.
To this end, we first clarify what bisimilarity means in our setting.
A relation $R \subseteq S \times S$ over the states of $\G$ is called a \emph{bisimulation} on $\G$, if
\begin{itemize}
	\item for all $(s_1,s_2) \in R$ the formulas $\goal(s_1) \iff \goal(s_2)$, $\init(s_1) \iff \init(s_2)$ and $\reachvar(s_1) \iff \reachvar(s_2)$ are valid (recall that $\reachvar$ holds exactly in states controlled by \playerR{}).
\item for all $(s_1,s_2) \in R$ and $s_1' \in S$ such that $(\safe \lor \reach)(s_1,s_1')$ holds, there exists $s_2' \in S$ such that $(\safe \lor \reach)(s_2,s_2')$ holds, and $(s_1',s_2') \in R$.
\item for all $(s_1,s_2) \in R$ and $s_2' \in S$ such that $(\safe \lor \reach)(s_2,s_2')$ holds, there exists $s_1' \in S$ such that $(\safe \lor \reach)(s_1,s_1')$ holds, and $(s_1',s_2') \in R$.
\end{itemize}
We say that $s_1$ and $s_2$ are \emph{bisimilar} (denoted by $s_1 \sim s_2$) if there exists a bisimulation $R$ such that $(s_1,s_2) \in R$.
Bisimilarity is an equivalence relation, and it is the coarsest bisimulation on $\G$.
The equivalence classes are called \emph{bisimulation classes}.
As the winning region of any player can be expressed in the $\mu$-calculus~\cite{Zappe01} and the $\mu$-calculus is invariant under bisimulation~\cite{BradfieldS07}, it follows that bisimilar states are won by the same player.
\begin{lemma}
  \label{lem:bisimpreservesreach}
  Let $R$ be a bisimulation on $\G$.
  If $(s_1, s_2) \in R$, then \playerR{} wins from $s_1$ in $\G$ if and only if \playerR{} wins from $s_2$ in $\G$.
\end{lemma}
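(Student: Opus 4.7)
The plan is to prove both implications by a direct strategy transfer argument, using the bisimulation $R$ to translate a winning strategy in one state into a winning strategy in the bisimilar state. By symmetry it suffices to establish one direction; assume \playerR{} has a winning reachability strategy $\sigma_1$ from $s_1$ and construct a winning strategy $\sigma_2$ from $s_2$. Throughout, we use that bisimilar states agree on $\goal$, $\init$, $\reachvar$, and their outgoing transitions match up via $R$ (so in particular one is a trap state iff the other is).

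The key idea is to run a \emph{shadow play} from $s_1$ in lockstep with the real play from $s_2$, maintaining the invariant that after $n$ moves the shadow state $t_1^{(n)}$ and the real state $t_2^{(n)}$ satisfy $(t_1^{(n)}, t_2^{(n)}) \in R$. I would define $\sigma_2$ together with an auxiliary function $h$ that maps each play prefix $\omega_2 = s_2 \ldots t_2$ from $s_2$ to a play prefix $h(\omega_2) = s_1 \ldots t_1$ from $s_1$ with $t_1 \sim t_2$ and $|h(\omega_2)| = |\omega_2|$, by induction on the length of $\omega_2$. For the inductive step, if $t_2 \in S_{\playerR}$ then (by the first bisimulation clause) also $t_1 \in S_{\playerR}$; let $t_1' := \sigma_1(h(\omega_2))$, use the bisimulation clause on $(t_1, t_2)$ to pick some $t_2'$ with $t_2 \sim t_2'$ err, with $(\safe \lor \reach)(t_2, t_2')$ and $(t_1', t_2') \in R$, set $\sigma_2(\omega_2) := t_2'$ and $h(\omega_2 t_2') := h(\omega_2) t_1'$. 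If $t_2 \in S_{\playerS}$, then for each successor $t_2'$ chosen by \playerS{} we use the third bisimulation clause on $(t_1, t_2)$ to pick $t_1'$ with $(\safe \lor \reach)(t_1, t_1')$ and $(t_1', t_2') \in R$, and extend $h$ accordingly. Trap states match on both sides by the bisimulation clauses, so finite plays correspond to finite plays of the same length.

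By construction every play $\rho_2$ from $s_2$ that is consistent with $\sigma_2$ induces a play $\rho_1 = h(\rho_2)$ from $s_1$ consistent with $\sigma_1$ (for \playerR{}'s moves this is immediate; for \playerS{}'s moves the chosen $t_1'$ is a legal successor, so $\rho_1$ is a valid play). Since $\sigma_1$ is winning, some state along $\rho_1$ satisfies $\goal$; by the invariant the state at the same index in $\rho_2$ is bisimilar, hence also satisfies $\goal$. Therefore $\sigma_2$ is winning from $s_2$. The reverse implication follows by the symmetric construction using the second bisimulation clause for \playerS{}'s moves in place of the third, swapping the roles of $s_1$ and $s_2$.

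The main obstacle I anticipate is bookkeeping rather than conceptual: $\sigma_2$ and $h$ must be defined on \emph{all} relevant prefixes (including those arising from arbitrary \playerS{} choices), and the bisimulation successors in general are not unique, so some choice function is needed to make $h$ well-defined. Provided this choice is fixed once and for all (for example by well-ordering $S$), the invariant and the winning argument are then straightforward. Note that this direct argument sidesteps the $\mu$-calculus characterisation mentioned in the text and works uniformly for both finite and infinite plays, since the hitting of $\goal$ is a safety-like reachability event that is preserved step-by-step under the bisimilarity invariant.
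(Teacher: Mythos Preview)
Your argument is correct. The direct strategy-transfer construction via a shadow play is a standard and sound way to prove that reachability winning regions are bisimulation-invariant, and you have identified the right use of the two transfer clauses: the second clause to mirror \playerR{}'s chosen moves from the shadow into the real play, and the third clause to mirror \playerS{}'s adversarial moves from the real play back into the shadow. The bookkeeping concerns you flag (simultaneous inductive definition of $\sigma_2$ and $h$, non-unique bisimulation successors requiring a fixed choice, arbitrary extension of $\sigma_2$ off the relevant prefixes) are genuine but routine, and do not affect correctness.

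The paper, by contrast, gives no explicit construction: it simply invokes that the winning region of a reachability game is definable in the $\mu$-calculus, and that $\mu$-calculus formulas are invariant under bisimulation, citing~\cite{Zappe01,BradfieldS07}. This is a one-line appeal to heavy external machinery. Your approach is strictly more elementary and self-contained; it works directly at the level of plays and strategies and requires nothing beyond the definition of bisimulation. The paper's route is shorter to state and generalises immediately to any $\mu$-calculus-definable objective, whereas your argument is tailored to reachability (though it would adapt to safety and, with more work, to parity). For the present lemma both are perfectly adequate; yours has the advantage of being checkable without leaving the paper.
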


We will assume that for each bisimulation class $S_i$ there exists a formula $\psi_i \in \logic(\var)$ that \emph{defines} $S_i$, formally: 
For all $s \in S$, $\psi_i(s)$ holds if and only if $s \in S_i$.
Furthermore, we will assume that the interpolation procedure \emph{respects}~$\sim$, formally: $\interpolate(\varphi,\psi)$ is equivalent to a disjunction of formulas $\psi_i$.
Such an interpolant exists if $\psi$ or $\varphi$ already satisfy this assumption.

\begin{restatable}{theorem}{terminationbisim}
  Let $\G$ be a reachability game with finite bisimulation quotient under $\sim$ and assume that all bisimulation classes of $\G$ are definable in $\logic$.
  Furthermore, assume that $\interpolate$ respects $\sim$.
  Then, $\algreach(\G)$ terminates.
\end{restatable}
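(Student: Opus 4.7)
The plan is to prove termination by maintaining two invariants throughout the recursion: every state predicate computed remains $\sim$-invariant (definable as a disjunction of bisimulation-class formulas), and a well-founded measure on the transition relation strictly decreases with each recursive call. Since this measure is bounded by the square of the (finite) bisimulation quotient, strict descent on $\mathbb{N}$ forces termination.

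The invariance claim I would establish by structural induction on the recursion tree. The base predicates $\goal$ and $\reachvar$ are $\sim$-invariant by definition of bisimulation. Every interpolant computed is of the form $\interpolate(I, \goal)$, so because $\goal$ is $\sim$-invariant and $\interpolate$ respects $\sim$ by assumption, every such $\varphi$ is $\sim$-invariant. The key inductive lemma is that the projections $\pre$, $\post$, $\enf$, $\instantiate$ preserve $\sim$-invariance: indeed, by the back-and-forth condition on $\safe \lor \reach$, two $\sim$-equivalent states see the same set of $\sim$-classes among their successors, and restricting $\safe \lor \reach$ by $\sim$-invariant constraints on source and target (via Boolean connectives and the substitutions $\subsp, \subsm$) preserves this compatibility. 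Consequently the transition predicates $C$ and $F$, the enforceability predicate $\pre(\enf(F,\G))$, and the returned winning region $R$ are all $\sim$-invariant.

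For the measure, let $\mu(\G')$ be the number of pairs $(B_1, B_2)$ of bisimulation classes of $\sim$ for which $\G'$ has some transition from a state in $B_1$ to a state in $B_2$. By finiteness of the bisimulation quotient, $\mu$ takes values in a bounded range $\{0,1,\ldots,K\}$. Strict decrease for $\G_{post}$: since the algorithm did not return at line~\ref{line:safety wins}, the subgoal $C$ is satisfiable, witnessed by some $(s_1, s_2)$ with $\neg\varphi(s_1)$ and $(\safe \lor \reach)(s_1, s_2)$; because $\varphi$ is $\sim$-invariant, the class $[s_1]_\sim$ lies entirely outside $\varphi$, so no transition of $\G_{post}$ emanates from it, and the pair $([s_1]_\sim, [s_2]_\sim)$ is counted by $\mu(\G)$ but not by $\mu(\G_{post})$. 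Strict decrease for $\G_{pre}$: since the algorithm did not return at line~\ref{line:thirdreturn}, $F$ is satisfiable by some $(s_1, s_2)$, and $\sim$-invariance of the state predicates $\varphi$, $R_{post}$, and $\goal$ (from which $F$ is built in both Cases~2 and~3) implies that every $(\safe \lor \reach)$-transition between $[s_1]_\sim$ and $[s_2]_\sim$ also satisfies $F$; hence the pair disappears from $\mu(\G_{pre})$.

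The main obstacle is that $\post(C)\subsm$, the initial condition of $\G_{post}$, need not be $\sim$-invariant, since bisimilarity is only a forward-matching relation and does not in general preserve existence of predecessors. A priori this could prevent $R_{post}$ from being $\sim$-invariant, which would invalidate the strict-descent argument for $\G_{pre}$. The resolution is that the assumption on $\interpolate$ is strong enough: $\goal$ is $\sim$-invariant in every subgame along the recursion, so every interpolant is $\sim$-invariant regardless of the status of its first argument, and $R_{post}$ is built solely from interpolants, enforceability projections, and winning regions returned by deeper recursive calls. Invariance thus propagates all the way down, and the measure-descent argument yields termination.
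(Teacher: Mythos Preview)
Your overall architecture coincides with the paper's: the same measure (number of ordered pairs of bisimulation classes connected by a transition), the same invariant ($\sim$-invariance of the relevant state predicates, in particular of each subgame's $\goal$), and the same strict-descent argument for $\G_{post}$ and $\G_{pre}$. You also correctly isolate the one genuinely delicate point, namely that $\post(C)\subsm$ need not be $\sim$-invariant.

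Your resolution of that point, however, does not work. You claim that ``$R_{post}$ is built solely from interpolants, enforceability projections, and winning regions returned by deeper recursive calls.'' Inspecting the algorithm shows this is false: at every return point, $R$ is either $\init \land \goal$ or $(\init \land \goal) \lor R_{pre}$, and $R_{pre}$ in turn is computed from the game whose initial predicate is $I = \init \land \neg \goal$. So the formula $R_{post}$ syntactically contains $\init_{post} = \post(C)\subsm$ throughout the recursion, and your syntactic argument for its $\sim$-invariance collapses.

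The paper's resolution is semantic rather than syntactic. By partial correctness (\Cref{thm:partcorr}), $R_{post}$ characterizes exactly those initial states of $\G_{post}$ that are winning for \playerR{}. Because the winning region is expressible in the $\mu$-calculus and the $\mu$-calculus is bisimulation-invariant (\Cref{lem:bisimpreservesreach}), the \emph{full} winning region $W_{post}$ of $\G_{post}$ is $\sim$-invariant (here one uses that restricting transitions by the $\sim$-invariant predicate $\varphi$ keeps $\sim$ a bisimulation with respect to the transition relation and $\goal$). Now observe that $C$ already forces the target state into $\post(C)\subsm$, so $F = C \land R_{post}\subsp$ is equivalent to $C \land W_{post}\subsp$, and hence $F$ can be written as $(\safe \lor \reach) \land \phi_1 \land \phi_2\subsp$ with $\phi_1 = \neg\varphi$ and $\phi_2 = \varphi \land W_{post}$ both $\sim$-invariant. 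This gives both the measure descent for $\G_{pre}$ and, via a separate lemma about $\pre(\enf(\cdot,\G))$ applied to such predicates, the $\sim$-invariance of the new goal $\pre(\enf(F,\G))$ needed to propagate the invariant.
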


\section{Case Studies}
\label{sec:experiments}

In this section we evaluate our approach on a number of case studies. 
Our prototype \cabpy{}\footnote[2]{The source code of \cabpy{} and our experimental data are both available at \url{https://github.com/reactive-systems/cabpy}. We provide a virtual machine image with \cabpy{} already installed for reproducing our evaluation~\cite{VM}.} is written in Python and implements the game solving part of the presented algorithm. 
Extending it to returning a symbolic strategy using the ideas outlined above is straightforward.
We compared our prototype with \simsynth \cite{FarzanK17}, the only other readily available tool for solving linear arithmetic games. 
The evaluation was carried out with Ubuntu 20.04,  a 4-core Intel\textsuperscript{\textregistered} Core\texttrademark~i5 2.30GHz processor, as well as 8GB of memory. \cabpy{} uses the PySMT~\cite{pysmt2015} library as an interface to the MathSAT5~\cite{mathsat5} and Z3~\cite{z3solver} SMT solvers.
On all benchmarks, the timeout was set to 10 minutes. In addition to the winner, we report the runtime and the number of subgames our algorithm visits. Both may vary with different SMT solvers or in different environments.
\subsection{Game of Nim}

Game of Nim is a classic game from the literature \cite{Bouton1901} and played on a number of heaps of stones. Both players take turns of choosing a single heap and removing at least one stone from it. We consider the version where the player that removes the last stone wins. Our results are shown in \Cref{exp_nim}. In instances with three heaps or more we bounded the domains of the variables in the instance description, by specifying that no heap exceeds its initial size and does not go below zero.

Following the discussion in \Cref{sec:termination}, we need to bound the domains to ensure the termination of our tool on these instances.
Remarkably, bounding the variables was not necessary for instances with only two heaps, where our tool \cabpy{} scales to considerably larger instances than \simsynth.
We did not add the same constraints to the input of \simsynth{}, as for \simsynth{} this resulted in longer runtimes rather than shorter.
In Game of Nim, there are no natural necessary subgoals that the safety player can locally control.

The results (see~\Cref{exp_nim}) demonstrate that our approach is not completely dependent on finding the right interpolants and is in particular also competitive when the reachability player wins the game. We suspect that \simsynth{} performs worse in these cases because the safety player has a large range of possible moves in most states, and inferring the win of the reachability player requires the tool to backtrack and try our all of them.

\begin{figure}[tbp]
	\centering
	{\def\arraystretch{1.1}\tabcolsep=5pt
    \small
		\begin{tabular}{|c|c|c|c|c|}
			\hline
			& \multicolumn{2}{c|}{\cabpy} & \simsynth & \\
			\hline
			Heaps & Subgames & Time(s) &Time(s) & Winner\\
			\hline
			(4,4) & 19 & 1.50 & 10.44 & $\playerR$\\
			(4,5) & 23 & 1.92 & 12.74 & $\playerS$\\
			(5,5) & 23 & 1.99 & 85.75 & $\playerR$\\
			(5,6) & 27 & 2.90 & 91.66 & $\playerS$\\
			(6,6) & 28 & 3.04 & Timeout & $\playerR$\\
			(6,7) & 31 & 3.76 & Timeout & $\playerS$\\
			(20,20) & 88 & 94.85 & Timeout & $\playerR$\\
			(20,21) & 94 & 113.04 & Timeout & $\playerS$\\
			(30,30) & 128 & 364.13 & Timeout & $\playerR$\\
			(30,31) & 135 & 404.02 & Timeout & $\playerS$\\\hline
			(3,3,3)b & 23 & 13.63 & 2.85 & $\playerS$\\
			(1,4,5)b & 32 & 7.00 & 289.85 & $\playerR$\\
			(4,4,4)b & 33 & 50.55 & 24.39 & $\playerS$\\
			(2,4,6)b & 38 & 19.77 & Timeout & $\playerR$\\
			(5,5,5)b & 33 & 127.89 & 162.50 & $\playerS$\\
			(3,5,6)b & 40 & 86.56 & Timeout & $\playerR$\\\hline
			(2,2,2,2)b & 39 & 84.79 & 213.79 & $\playerR$\\
			(2,2,2,3)b & 41 & 102.01 & Timeout & $\playerS$\\
			\hline
	\end{tabular}}
	\caption{Experimental results for the Game of Nim. The notation $(h_1,\ldots, h_n)$ denotes the instance played on $n$ heaps, each of which consists of $h_i$ stones. Instances marked with b indicate that the variable domains were explicitly bounded in the input for \cabpy{}.}
	\label{exp_nim}
\end{figure}

\begin{figure}[tbp]
	\centering
	{\def\arraystretch{1.1}\tabcolsep=5pt
    \small
		\begin{tabular}{|c|c|c|c|c|}
			\hline
			& \multicolumn{2}{c|}{\cabpy} & \simsynth & \\
			\hline
			$r$ & Subgames & Time(s) &Time(s) & Winner\\
			\hline
			10 & 10 & 0.57 & 3.93 & $\playerS$\\
			20 & 20 & 1.23 & 20.48 & $\playerS$\\
			40 & 40 & 3.42 & 121.96 & $\playerS$\\
			60 & 60 & 7.36 & Timeout & $\playerS$\\
			80 & 80 & 17.72 & Timeout & $\playerS$\\
			100 & 100 & 26.36 & Timeout & $\playerS$\\
			\hline
	\end{tabular}}
	\caption{Experimental results for the Corridor game. The safety player controls the door between rooms $r-1$ and $r$.}
	\label{exp_corridor}
\end{figure}

\subsection{Corridor}

We now consider an example that demonstrates the potential of our method in case the game structure contains natural bottlenecks. Consider a corridor of $100$ rooms arranged in sequence, i.e., each room $i$ with $0 \leq i < 100$ is connected to room $i+1$ with a door. The objective of the reachability player is to reach room 100 and they are free to choose valid values from $\mathbb{R}^2$ for the position in each room at every other turn. The safety player controls some door to a room $r \leq 100$. Naturally, a winning strategy is to prevent the reachability player from passing that door, which is a natural bottleneck and necessary subgoal on the way to the last room.

 The experimental results are summarized in Figure \ref{exp_corridor}. We evaluated several versions of this game, increasing the length from the start to the controlled door. The results confirm that our causal synthesis algorithm finds the trivial strategy of closing the door quickly. This is because Craig interpolation focuses the subgoals on the room number variable while ignoring the movement in the rooms in between, as can be seen by the number of considered subgames. \simsynth, which tries to generalize a strategy obtained from a step-bounded game, struggles because the tool solves the games that happen between each of the doors before reaching the controlled one.

\subsection{Mona Lisa}

The game described in Section \ref{sec:motivation} between a thief and a security guard is very well suited to further assess the strength and limitations of both our approach as well as of \simsynth{}. We ran several experiments with this scenario, scaling the size of the room and the sleep time of the guard, as well as trying a scenario where the guard does not sleep at all. Scaling the size of the room makes it harder for \simsynth{} to solve this game with a forward unrolling approach, while our approach extracts the necessary subgoals irrespective of the room size. However, scaling the guard's sleep time makes it harder to solve the subgame between the two necessary subgoals, while it only has a minor effect on the length of the unrolling needed to stabilize the play in a safe region, as done by \simsynth.

 The results in Figure \ref{exp_monalisa} support this conjecture. The size of the room has \emph{almost no effect at all} on both the runtime of \cabpy{} and the number of considered subgames. However, as the results for a sleep value of 4 show, the employed combination of quantifier elimination and interpolation introduces some instability in the produced formulas. This means we may get different Craig interpolants and slice the game with more or less subgoals. Therefore, we see a lot of potential in optimizing the interplay between the employed tools for quantifier elimination and interpolation. The phenomenon of the runtime being sensitive to these small changes in values is also seen with \simsynth, where a longer sleep time sometimes means a faster execution.

\begin{figure}[tbp]
	\centering
    {\def\arraystretch{1.1}\tabcolsep=5pt
      \small
		\begin{tabular}{|c|c|c|c|c|c|}
			\hline
			&  & \multicolumn{2}{c|}{\cabpy} & \simsynth & \\
			\hline
			Size & Sleep & Subgames & Time(s) & Time(s) & Winner\\
			\hline
			$10\times10$ & - & 7 & 0.61 & 4.79 & $\playerS$\\
			$20\times20$ & - & 7 & 0.60 & 25.26 & $\playerS$\\
			$40\times40$ & - & 7 & 0.61 & 157.62 & $\playerS$\\
			\hline
			$10\times10$ & 1 & 10 & 4.22 & 20.31 & $\playerS$\\
			$20\times20$ & 1 & 11 & 4.34 & 36.44 & $\playerS$\\
			$40\times40$ & 1 & 11 & 4.65 & 226.14 & $\playerS$\\
			\hline
			$10\times10$ & 2 & 13 & 5.88 & 7.40 & $\playerS$\\
			$20\times20$ & 2 & 14 & 5.98 & 60.00 & $\playerS$\\
			$40\times40$ & 2 & 13 & 5.92 & 270.48 & $\playerS$\\
			\hline
			$10\times10$ & 3 & 18 & 26.58 & 13.94 & $\playerS$\\
			$20\times20$ & 3 & 17 & 26.19 & 115.53 & $\playerS$\\
			$40\times40$ & 3 & 18 & 27.85 & 290.12 & $\playerS$\\
			\hline
			$10\times10$ & 4 & 30 & 175.27 & 13.96 & $\playerS$\\
			$20\times20$ & 4 & 22 & 204.79 & 60.08 & $\playerS$\\
			$40\times40$ & 4 & 27 & 123.95 & 319.47 & $\playerS$\\
			\hline
	\end{tabular}}
	\caption{Experimental results for the Mona Lisa game.}
	\label{exp_monalisa}
\end{figure}

\subsection{Program Synthesis}

Lastly, we study two benchmarks that are directly related to program synthesis. 
The first problem is to synthesize a controller for a thermostat by filling out an incomplete program, as described in \cite{BeyeneCPR14}. A range of possible initial values of the room temperature $c$ is given, e.g., $20.8 \leq c \leq 23.5$, together with the temperature dynamics which depend on whether the heater is on (variable $o \in \mathbb{B}$).
The objective for $\playerS$ is to control the value of $o$ in every round such that $c$ stays between $20$ and $25$. This is a common benchmark for program synthesis tools and both \cabpy{} and \simsynth{} solve it quickly.
The other problem relates to Lamport's bakery algorithm\cite{Lamport1974}. We consider two processes using this protocol to ensure mutually exclusive access to a shared resource. The game describes the task of synthesizing a scheduler that violates the mutual exclusion. This essentially is a model checking problem, and we study it to see how well the tools can infer a safety invariant that is out of control of the safety player. For our approach, this makes no difference, as both players may play through a subgoal and the framework is well suited to find a safety invariant. The forward unrolling approach of \simsynth, however, seems to explore the whole state space before inferring safety, and fails to find an invariant before a timeout. 

\begin{figure}[tbp]
	\centering
	{\def\arraystretch{1.1}\tabcolsep=5pt
    \small
		\begin{tabular}{|c|c|c|c|c|}
			\hline
			& \multicolumn{2}{c|}{\cabpy} & \simsynth & \\
			\hline
			Name & Subgames & Time(s) &Time(s) & Winner\\
			\hline
			Thermostat & 6 & 0.44 & 0.39 & $\playerS$ \\
			Bakery & 46 & 18.25 & Timeout & $\playerS$ \\
			\hline
	\end{tabular}}
	\caption{Experimental results for program synthesis problems.}
	\label{exp_synth}
\end{figure}

\section{Conclusion}
Our work is a step towards the fully automated synthesis of software. 
It targets symbolically represented reachability games which are expressive enough to model a variety of problems, from common game benchmarks to program synthesis problems. 
The presented approach exploits causal information in the form of \emph{subgoals}, which are parts of the game that the reachability player needs to pass through in order to win.
Having computed a subgoal, which can be done using Craig interpolation, the game is split along the subgoal and solved recursively.
At the same time, the algorithm infers a structured symbolic strategy for the winning player.
The evaluation of our prototype implementation \textsc{CabPy} shows that our approach is practically applicable and scales much better than previously available tools on several benchmarks.  
While termination is only guaranteed for games with finite bisimulation quotient, the experiments demonstrate that several infinite games can be solved as well.

This work opens up several interesting questions for further research.
One concerns the quality of the returned strategies.
Due to its compositional nature, at first sight it seems that our approach is not well-suited to handle global optimization criteria, such as reaching the goal in fewest possible steps.
On the other hand, the returned strategies often involve only a few key decisions and we believe that therefore the strategies are often very sparse, although this has to be further investigated.
We also plan to automatically extract deterministic strategies from the symbolic ones~\cite{Bloem,Ehlers} we currently consider.

Another question regards the computation of subgoals. 
The performance of our algorithm is highly influenced by which interpolant is returned by the solver.
In particular this affects the number of subgames that have to be solved, and how complex they are.
We believe that template-based interpolation \cite{template_interpolation} could be a promising candidate to explore with the goal to compute good interpolants. 
This could be combined with the possibility for the user to provide templates or expressive interpolants directly, thereby benefiting from the user's domain knowledge.

\bibliographystyle{splncs04}
\bibliography{lit}
\section*{Appendix}

\avoidingC*
\begin{proof}
	We first show that all plays that conform to the symbolic safety strategy $\safe \land \neg C$ are winning for \playerS{}.
	Let $\rho = s_0s_1 \ldots$ be such a play and assume, for contradiction, that $n \in \mathbb{N}$ exist such that $\goal(s_n)$ is valid, and let $n$ be the least such $n$.
	From $\unsat(\enf(C,\G))$ it follows that all states in $\pre(C)$ are under the control of \playerS{}, and hence $C(s_j,s_{j+1})$ can only be true if $s_j \in S_{\playerS}$.
	But as $\rho$ conforms to $\safe \land \neg C$, it follows that all play prefixes of $\rho$ conform to $\safe \land \neg C$, in particular, for all $j < n$ the formula $\neg C(s_j,s_{j+1})$ must hold.
  This is in contradiction to the fact that $C$ is a necessary subgoal.
	
	Now we show that if $\pi = s_0 \ldots s_n \in S^*S_{\playerS}$ is a play prefix that conforms to $\safe \land \neg C$, then $\pre(\safe \land \neg C)(s_n)$ holds.
	As $\pi$ is a play prefix and $s_n \in S_{\playerS}$, $\safe(s_n)$ is satisfiable.
	From $\unsat(\enf(C,\G))$ it follows that if $(\safe \land C)(s_j)$ is satisfiable, then so is $(\safe \land \neg C)(s_j)$, which concludes the proof.
	\qed
\end{proof}

\gpost*
\begin{proof}
	First, we observe that $\sstrat \implies ((\safe \lor \reach) \land \varphi)$ is valid by assumption, and hence so is $\sstrat \implies (\safe \lor \reach)$.
	We show that all plays of $\G$ that conform to $\sstrat$ are winning for \playerR{}.
	Let $\rho = s_0 s_1 \ldots$ be such a play, and assume for contradiction that it is winning for \playerS{}.
	As playing according to $\sstrat$ is winning for \playerR{} in $\G_{post}$, the play needs to see a transition that is not included in $\G_{post}$.
	Such a transition can only be played by \playerS{}, as \playerR{} plays according to $\sstrat$, and $\sstrat \implies ((\safe \lor \reach) \land \varphi)$ is valid.
	This can only happen if there exists $i \in \mathbb{N}$ such that $(\safe\land \lnot\varphi)(s_i, s_{i+1})$ holds, or in other words, $\safe(s_i, s_{i+1})$ and $\lnot\varphi(s_i)$. Let $i$ be the first such index. Then $s_i$ does not have any outgoing transitions in $\G_{post}$, meaning that $s_0\ldots s_i$ is a finite play in $\G_{post}$, and hence there must exist $0\leq j\leq i$ such that $\goal(s_j)$ is satisfied.
	Otherwise, the play would be winning for \playerS{} in $\G_{post}$, contradicting the assumption.
	Hence, $\rho$ is winning for \playerR{}.
	
	It remains to show that any play prefix $s_0 \ldots s_n \in S^*S_{\playerR}$ that conforms to $\sstrat$ is such that $(\sstrat \land \reach)(s_n)$ is satisfiable.
  If the play prefix remains inside $\G_{post}$, then this follows from the fact that for all such prefixes $(\sstrat \land \reach(s_n) \land \varphi)(s_n)$ is satisfiable by assumption, as $\sstrat$ is winning for \playerR{} in $\G_{post}$.
  Otherwise, the play prefix corresponds to a play that conforms to $\sstrat$ and is loosing for $\playerR{}$ in $\G_{post}$.
  But such a play does not exist by assumption that $\sstrat$ is a winning reachability strategy for $\playerR{}$ in $\G_{post}$.
	\qed
\end{proof}

\perfectpartition*
\begin{proof}
  \Cref{lem:gpost} can be adapted slightly to show that if \playerR{} has \emph{any} (not necessarily expressable as a symbolic strategy) winning strategy from $s$ in $\G_{post}$, then they have a winning stratey from $s$ in $\G$. 

	Hence, it is enough to prove the backward direction. Let us assume that there is a reachability player strategy $\sigma_R$ that is winning from $s$ in $\G$, and, for contradiction, a safety player strategy $\sigma_S$ that is winning from $s$ in $\G_{post}$. Consider strategy $\sigma'_S$ such that $\stratS'(\omega s') = \stratS(\omega s')$ if $(\safe \land \varphi)(s')$ is satisfiable, and else $\stratS'(\omega s') = \stratS''(\omega s')$, where $\stratS''$ is an arbitrary safety player strategy in $\G$. There exists a unique play $\rho = s_0s_1\ldots$ in $\G$ consistent with both $\sigma_R$ and $\sigma'_S$ with $s_0 = s$.  Since s satisfies $\post(C)[\var'/\var]$, we know that $\varphi(s_0)$ is valid (i.e., $\rho$ starts in $\G_{post}$), and since $\varphi$ perfectly partitions $\G$, we know that there is no $0 \leq i$ such that $\left((\reach \lor \safe) \land \varphi \land \neg \varphi'\right)(s_i, s_{i+1})$ holds. It follows that for all $0 \leq j$ we have that $\varphi(s_j)$ holds (i.e., $\rho$ stays in $\G_{post}$). Hence, $\rho$ is a play in $\G_{post}$ consistent with $\sigma_S$ and $\sigma_R$ (considered as a strategy in $\G_{post}$). Since $\sigma_R$ is winning in $\G$, there must exist some $n\in\mathbb{N}$ such that $\goal(s_n)$ is valid. This is a contradiction with $\sigma_S$ being a winning strategy in $\G_{post}$.
	\qed
\end{proof}


\begin{lemma}
	\label{lem:unsat_sum}
	Let $T,F$ be transition predicates such that $\unsat(\pre(T)\land\pre(F))$ holds. Suppose that $\G_T$ and $\G_F$ are games such that  $\unsat(\enf(T,\G_T))$ and $\unsat(\enf(F,\G_F))$ both evaluate to true, and we have 
	\begin{gather*}
	T \implies (\safe_T \lor \reach_T) \implies (\safe \lor \reach),\\
	F \implies (\safe_F \lor \reach_F) \implies (\safe \lor \reach).
	\end{gather*}
	Then it follows that $\unsat(\enf(T \lor F,\G))$ is also true.
\end{lemma}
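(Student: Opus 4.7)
The plan is to argue by contradiction: I would assume some state pair $(s,s')$ satisfies $\enf(T \lor F,\G)$ and derive a contradiction with one of the hypotheses $\unsat(\enf(T,\G_T))$ or $\unsat(\enf(F,\G_F))$. Unfolding the operator, the assumption gives me three ingredients at $(s,s')$: the transition lies in $\safe \lor \reach$, it satisfies $T \lor F$, and no $\safe$-successor $s''$ of $s$ escapes $T \lor F$. By symmetry of the two disjuncts in the statement, I can assume without loss of generality that $T(s,s')$ holds; then $\pre(T)(s)$ is witnessed, and by the disjointness hypothesis $\unsat(\pre(T)\land\pre(F))$ I immediately conclude $\lnot\pre(F)(s)$, i.e.\ no $s''$ satisfies $F(s,s'')$.

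Next I would show that $(s,s')$ in fact satisfies $\enf(T,\G_T)$, which contradicts $\unsat(\enf(T,\G_T))$. The first two conjuncts $(\safe_T \lor \reach_T)(s,s')$ and $T(s,s')$ follow straight from $T(s,s')$ via the assumed implication chain. The core of the work lies in showing $\lnot\exists\varp.\bigl(\safe_T \land \lnot T\bigr)$ at $s$. Suppose for contradiction some $s''$ witnesses $\safe_T(s,s'') \land \lnot T(s,s'')$. From $\safe_T \implies \neg\reachvar$ (by the game-definition constraint) together with $\safe_T \lor \reach_T \implies \safe \lor \reach$ and $\reach \implies \reachvar$, I can conclude that $\safe(s,s'')$ holds. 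But then the third ingredient of $\enf(T \lor F,\G)(s,s')$, namely $\lnot\exists\varp.\bigl(\safe \land \lnot(T \lor F)\bigr)$ at $s$, forces $T(s,s'') \lor F(s,s'')$; since $\lnot T(s,s'')$ by choice of $s''$, we obtain $F(s,s'')$, contradicting $\lnot\pre(F)(s)$. Hence no such $s''$ exists, completing the verification that $\enf(T,\G_T)(s,s')$ is satisfied.

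The case $F(s,s')$ proceeds symmetrically, contradicting $\unsat(\enf(F,\G_F))$, and thus $\enf(T \lor F,\G)$ admits no satisfying assignment. I do not expect any serious obstacle here; the only subtle point is the passage from $\safe_T$ to $\safe$, which requires using both the game-level constraint $\safe_T \implies \neg\reachvar$ and the hypothesis relating the transition relations of $\G_T$ and $\G$. Once this small implication is made explicit, the remainder is a direct symbol push through the definition of $\enf$.
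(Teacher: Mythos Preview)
Your proposal is correct and follows essentially the same contradiction argument as the paper: assume a witness of $\enf(T \lor F,\G)$, use the disjointness of $\pre(T)$ and $\pre(F)$ to localize to one disjunct, and then exhibit a $\safe$-escape violating the $\enf$ condition in the corresponding subgame. If anything, you are more careful than the paper at the one delicate point, namely the passage from $\safe_T$ to $\safe$ via the $\reachvar$ constraint, which the paper's proof leaves implicit.
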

\begin{proof}
	We proceed by contradiction. Suppose there were $s,v \in \S$ such that $\enf(T \lor F,\G)(s,v)$ holds. It follows that $(T \lor F)(s,v)$ must hold. Since we have $\unsat(\pre(T)\land\pre(F))$, two cases may occur: either (1) $\pre(T)(s)$ holds, or (2) $\pre(F)(s)$ holds, but not both. Since they work completely symmetrically, we give the argument only for case (2).
	
	So we know that $F(s,v)$ holds and $T(s,v)$ does not. As $F \implies (\safe_F \lor Reach_F)$, we know $(\safe_F \lor Reach_F)(s,v)$ holds. The only way we can still have $\unsat(\enf(F,\G_F))$ is if there is some $s'$ such that $\big(\safe_F(s,s')\land\lnot F(s,s')\big)$ holds. However, as $(\safe_F \lor \reach_F) \implies (\safe \lor \reach)$, then also $\big(\safe(s,s')\land\lnot F(s,s')\big)$, which is a contradiction to the fact that $\enf(F \lor T,\G)(s,v)$ holds. 
	\qed
\end{proof}

\partialcorrectness*
\begin{proof}
	We show by induction on the recursion depth that if $\algreach(\G)$ returns $(R,\sstrat_{\playerR{}},\sstrat_{\playerS{}})$, where $R \in \logic(\var)$, $\sstrat_{\playerR}$ and $\sstrat_{\playerS}$ are symbolic strategies for \playerR{}/\playerS{} respectively, and
	\begin{enumerate}
		\item[(a)] $\sstrat_{\playerR{}}$  is a winning strategy for \playerR{} from all states satisfying $R \lor \pre(\sstrat_{\playerR{}})$,
		\item[(b)] $\lnot \sstrat_{\playerS{}}$  is a necessary subgoal in $\G^I = \langle \init \land \lnot \mathit{R}, \safe, \reach, \goal \rangle$, and
		\item[(c)] if $\init \land \neg R$ is satisfiable, then $\unsat(\enf(\lnot \sstrat_{\playerS{}},\G))$ holds.
	\end{enumerate}
	Note that the last condition is equivalent to $\unsat(\enf(\lnot \sstrat_{\playerS{}},\G^I))$ as $\enf$ does not take initial states into account. Properties (b) and (c) ensure, in view of \Cref{lem:avoidingC}, that $\sstrat_{\playerS{}}$ is indeed a winning symbolic strategy from all initial states of $\G^I$ (provided that $\sstrat_{\playerS} \implies (\safe \lor \reach)$ holds, which we will show is the case). Together, these properties ensure that $R$ characterizes the set of initial states that are winning for $\playerR$ in $\G$. We distinguish the five cases that can occur when the algorithm terminates (there are four \textbf{return} statements, the last of which depends on the \textbf{if} statement in line~\ref{line:transback}).
	
	\medskip
	{\it Case 1:  $\algreach(\G)$ returns in line~\ref{line:ret1}}. Then the formula $(\init \implies \goal)$ is valid, which means that all initial states are goal states.
	The algorithm returns $R = \init\land\goal$, which in this case is equivalent to $\init$.
	Trivially, any strategy is winning for \playerR{} for all initial states.
  Then, $\sstrat_{\playerR} = \false$ is a winning symbolic strategy for any state satisfying $R$.
  This is because clearly any play starting in $R$ is winning for \playerR{}, and we only need to show that \playerR{} has a move satisfying $\sstrat_{\playerR}$ in a play prefix, which by definition has not seen $\goal$ already.
  In this case, there is no such play prefix that we need to cover.
  Observe also that $\pre(\false) \equiv \false$, which is required to show (a).

 As $\G^I$ has no initial states, one can choose $\sstrat_{\playerS{}}$ arbitrarily, as anything qualifies as necessary subgoal if $I \equiv \false$, in which case (c) is also directly satisfied. So the properties (a)--(c) above are satisfied.
	
	\medskip
	{\it Case 2:  $\algreach(\G)$ returns in line~\ref{line:safety wins}}. Then $\enf(C,\G)$ is unsatisfiable, where $C$ is the instantiation of the interpolant $\varphi$ (lines~\ref{line: phi} and \ref{line: subgoal}). 
 	\Cref{prop_necessary} states that $C$ is a necessary subgoal in $\G^I$. Since $\enf(C,\G)$ is unsatisfiable, \Cref{lem:avoidingC} states that $\sstrat_{\playerS{}} = \safe \land \lnot C$ is a winning symbolic strategy for $\playerS$ in $\G^I$. For \playerR{}, we have the same argument as in Case 1. We conclude that the properties (a)--(c) above are satisfied.
	
	\medskip
	{\it Case 3:  $\algreach(\G)$ returns in line~\ref{line:thirdreturn}}. By induction hypothesis we assume that the recursive call in line~\ref{line: recursion1} returned a tuple $(R_{post}, \sstrat_{\playerR{}}^{post}, \sstrat_{\playerS{}}^{post})$ satisfying properties (a)--(c) above for $\G_{post}$. We now show these properties in $\G$ for $R = \init\land\goal$, $\sstrat_{\playerR{}} = \false$, and $\sstrat_{\playerS{}} = \safe \land \neg F  \land (\varphi \implies \sstrat_{\playerS{}}^{post})$, where $F= C \land \mathit{R}_{post}\subsp$.
	
	As in the previous cases, property (a) is satisfied. In order to prove (b) let $\rho = s_0 s_1 \ldots$ be a play such that $(\init\land\lnot R)(s_0)$ and $\goal(s_n)$ hold for some $n \in \mathbb{N}$. As $C$ is a necessary subgoal in $\G$, there exists $m \in \mathbb{N}$ with $m < n$ such that $C(s_m,s_{m+1})$ holds. Take $m$ to be the last index with this property.  If $F(s_m,s_{m+1})$ holds, then also $(F  \lor (\varphi \land \lnot\sstrat_{\playerS{}}^{post}))(s_m,s_{m+1})$, which implies $\lnot\sstrat_{\playerS{}}(s_m, s_{m+1})$. If $\lnot F(s_m,s_{m+1})$ holds, then $(\post(C)\subsm \land \lnot R_{post})(s_{k+1})$ holds. By induction hypothesis, all winning plays for $\playerR$ in $\G_{post}$ starting in $\lnot R_{post}$ have some $n > j > m$ such that $\lnot\sstrat_{\playerS{}}^{post}(s_j,s_{j+1})$ holds.
  Furthermore, by construction of $\G_{post}$, all states $s_j$ with $n > j > m$ satisfy $\varphi$.
  Hence $\varphi \land \lnot\sstrat^{post}_{\playerS{}}(s_j,s_{j+1})$ holds, which implies that $\lnot \sstrat_{\playerS}(s_j,s_{j+1})$ holds. We conclude that $\lnot\sstrat_{\playerS{}}$ is a necessary subgoal in $\G^I$.
	
	We now show (c). Since we return in  line~\ref{line:thirdreturn}, we have $\unsat(\enf(F,\G))$ and, by induction hypothesis, $\unsat(\enf(\lnot\sstrat_{\playerS{}}^{post},\G_{post}))$.
  As the transition relation of $\G_{post}$ is restricted to $\varphi$, this implies $\unsat(\enf(\varphi \land \lnot\sstrat_{\playerS{}}^{post},\G_{post}))$.
 We also have $F \implies \lnot \varphi$ and $(\varphi \land \lnot\sstrat_{\playerS{}}^{post}) \implies \varphi$. As $(\safe_{post} \lor \reach_{post}) \implies (\safe \lor \reach)$ holds, we can apply \Cref{lem:unsat_sum} to conclude $\unsat(\enf(F  \lor (\varphi\land\lnot\sstrat_{\playerS{}}^{post})),\G)$, which implies $\unsat(\enf(\neg \safe \lor F  \lor (\varphi\land\lnot\sstrat_{\playerS{}}^{post})),\G) = \unsat(\enf(\lnot\sstrat_{\playerS{}}),\G))$.
		
	\medskip
	{\it Case 4:  $\algreach(\G)$ returns in line~\ref{line: last return} and the {\upshape\textbf{if}} statement in line~\ref{line:transback} is false}.
	By induction hypothesis we assume that the recursive calls in lines~\ref{line: recursion1} and \ref{line: recursion2} returned tuples $(R_{post}, \sstrat_{\playerR{}}^{post}, \sstrat_{\playerS{}}^{post})$ and $(R_{pre}, \sstrat_{\playerR{}}^{pre}, \sstrat_{\playerS{}}^{pre})$ satisfying properties (a)--(c) above for $\G_{post}$ and $\G_{pre}$. We now show these properties in $\G$ for $R\lor R_{pre}$, and 
	\begin{align*}
		\sstrat_{\playerR{}} &=  \;\;(\pre(\sstrat_{\playerR{}}^{post}) \implies \sstrat_{\playerR{}}^{post}) \\
  & \land \;\; ((\neg \pre(\sstrat_{\playerR{}}^{post}) \land \pre(F)) \implies F) \\
  & \land \;\; ((\neg \pre(\sstrat_{\playerR{}}^{post}) \land \neg \pre(F)) \implies \sstrat_{\playerR{}}^{pre}) \\
    & \land \;\; (\pre(\sstrat_{\playerR{}}^{post}) \lor \pre(F) \lor \pre(\sstrat_{\playerR{}}^{pre})) \\
		\sstrat_{\playerS{}} &= (\neg \varphi \implies \sstrat_{\playerS{}}^{pre}) \land (\varphi \implies \sstrat_{\playerS{}}^{post}),
	\end{align*} 
	with $F= C \land \mathit{R}_{post}\subsp$.

We first show that (a) $\sstrat_{\playerR{}}$ is winning for \playerR{} from states satisfying $R\lor R_{pre} \lor \pre(\sstrat_{\playerR})$. For states in $R$ this is trivial, so let  $\rho = s_0 s_1 \ldots$ be a play in $\G$ conforming to $\sstrat_{\playerR{}}$ such that $R_{pre}(s_0)$ holds.
  Our first claim is that if there exists $k \in \mathbb{N}$ such that $\pre(\sstrat_{\playerR}^{post})(s_k)$ holds, then $\rho$ must be winning for \playerR{}.
  This is due to the fact that $\sstrat_{\playerR}^{post}$ is winning in $\G_{post}$ from all states satisfying $\pre(\sstrat_{\playerR}^{post})$, which allows us to use~\Cref{lem:gpost}.
  To argue that $\sstrat_{\playerR}^{post}$ keeps playing according to $\sstrat_{\playerR}^{post}$ once such a state is reached, we observe that if a symbolic reachability strategy $\sstrat$ wins from $s$, then $\pre(\sstrat)$ holds in any state in $S_{\playerR}$ reachable from $s$ via a play prefix conforming to $\sstrat$, by definition.
  
  Now we show that such a position $k$ must exist.
  First, for $j \in \mathbb{N}$ such that $(\neg \pre(\sstrat_{\playerR}^{post}) \land \pre(\enf(F,\G)))(s_j)$ holds, the transition $(s_j,s_{j+1})$ must satisfy $F$.
  This is because if $s_j \in S_{\playerS}$, then all outgoing transitions from $s_j$ satisfy $F$.
  Otherwise, it follows by the fact that $\rho$ conforms to $\sstrat_{\playerR}$.
  As $\post(F) \equiv R_{post}\subsp$ and $\sstrat_{\playerR}^{pos}$ wins from all states satisfying $R_{post}$ by assumption, it follows that $s_{j+1}$ satisfies $\pre(\sstrat_{\playerR}^{post})$.

  As long as $\rho$ visits only states satisfying $(\neg \pre(\enf(F,\G)) \land \neg \pre(\sstrat_{\playerR}^{post}))$, the strategy $\sstrat_{\playerR}$ prescribes to play according to $\sstrat_{\playerR}^{pre}$.
  By assumption, this strategy is winning for \playerR{} in $\G_{pre}$, and hence the play $\rho$ eventually visits a state in $\pre(\enf(F,\G))$.
  As above, the play is guaranteed to stay in $\pre(\sstrat_{\playerR}^{pre})$ until that position.

  The above argument also shows that $\sstrat_{\playerR}$ is winning for all states satisfying $\pre(F) \lor \pre(\sstrat_{\playerR}^{post}) \lor \pre(\sstrat_{\playerR}^{pre})$, which is implied by $\pre(\sstrat_{\playerR})$.
  Also, $\sstrat_{\playerR} \implies (\reach \lor \safe)$ is valid, as the corresponding statements hold for the pre- and post-strategies, and $F \implies (\safe \lor \reach)$ is valid.

	
	 
	
	Next we show that (b) $\lnot \sstrat_{\playerS{}}$  is a necessary subgoal in $\G^I$.
	No player can play back from $\G_{post}$ to $\G_{pre}$ without $\playerR$ having already won in $\G_{post}$. We first show that under this condition,
  \[\neg \sstrat_{\playerS} = (\neg \varphi \land \neg \sstrat_{\playerS{}}^{pre}) \lor (\varphi \land \neg \sstrat_{\playerS{}}^{post})\]
  qualifies as a necessary subgoal in $\G^I$. For this, consider the necessary subgoal $C$.
  For any play $\rho = s_0 s_1 \ldots$ with $n \in \mathbb{N}$ such that $\goal(s_n)$ there is some $k \in \mathbb{N}$ with $k < n$ and $C(s_k,s_{k+1})$. As $F$ characterizes a subset of $C$, we check two cases: Either (1) $\lnot F(s_k,s_{k+1})$ or (2) $F(s_k,s_{k+1})$. In case (1), we have $\lnot R_{post}(s_{k+1})$ and because of our assumption that no transition of the game satisfies $\varphi \land \neg \varphi'$, for all $j \in \mathbb{N}$ with $k < j < n: \varphi(s_j)$. It follows by induction hypothesis that there is some $l \in \mathbb{N}$ such that $\neg \sstrat_{\playerS}^{post}(s_l,s_{l+1})$.
  In case (2) we use that $\sstrat_{\playerS}^{pre}$ plays only moves available in $\G_{pre}$, and hence $\sstrat_{\playerS}^{pre} \implies \neg F$ is valid.
  Furthermore $F \implies \neg \varphi$, as $F$ characterizes a subset of the subgoal $C$.
  Hence we can conclude that $F \implies (\neg \varphi \land \neg \sstrat_{\playerS}^{pre})$ is valid.
  It follows that $\neg \sstrat_{\playerS}$ qualifies as necessary subgoal.

  Finally we show (c) that $\unsat(\enf(\neg \sstrat_{\playerS},\G))$ holds.
  We have $(\safe_{pre} \lor \reach_{pre}) \implies (\safe \lor \reach)$ and $(\safe_{post} \lor \reach_{post}) \implies (\safe \lor \reach)$. As $\pre(\neg \varphi \land \neg \sstrat_{\playerS}^{pre}) \land \pre(\varphi \land \neg \sstrat_{\playerS}^{post})$ is cleary unsatisfiable, we can again apply Lemma \ref{lem:unsat_sum} to infer $\unsat(\enf(\neg \sstrat_{\playerS},\G))$.
  This uses that any transitions reachable in $\G_{post}$ has to satisfy $\varphi$ in this case.
	
	\medskip
	{\it Case 5:  $\algreach(\G)$ returns in line~\ref{line: last return} and the {\upshape\textbf{if}} statement in line~\ref{line:transback} is true}. 

	We  assume that both recursive calls terminated and, by induction, returned triples $(R_{post},\sstrat_{\playerR}^{post},\sstrat_{\playerS}^{post})$ and $(R_{pre},\sstrat_{\playerR}^{pre},\sstrat_{\playerS}^{pre})$ satisfying (a)-(c).

  (a) is shown exactly as in Case 4.

	For (b) we first observe that by setting $\varphi$ to $\false$ (see line~\ref{line:tpostfalse}) in this case we get $\sstrat_{\playerS} = \sstrat_{\playerS}^{pre}$.
  We show that $\neg \sstrat_{\playerS} = \neg \sstrat_{\playerS}^{pre}$ is a necessary subgoal in $\G_I$.
  The transition predicate $F$ in line~\ref{line: recursion1} is a sufficient subgoal by induction hypothesis, but due to the restriction on the post-game, we cannot conclude that states in $\post(C)$ that are not in $\post(F)$ are winning for \playerS{}.
	By adding all transitions to $\goal$ (line~\ref{line: E2}) we get that $F$ in line~\ref{line: recursion2} is a necessary and sufficient subgoal (clearly, any winning play must go through $\goal \subsp$).
	As we have ensured that $F$ is necessary, we know for all plays $\rho = s_0 s_1 \ldots$ with some $n \in \mathbb{N}$ such that $\goal(s_n)$ there is some $k \in \mathbb{N}$ with $k < n$ and $F(s_k,s_{k+1})$. As in Case 4 we may conclude that $F \implies \neg \sstrat_{\playerS}^{pre}$.
It follows that $\neg \sstrat_{\playerS}$ is a necessary subgoal in $\G_I$.

For (c) we observe that $\unsat(\pre(\enf(\neg \sstrat_{\playerS}^{pre},\G_{pre})))$ holds by induction hypothesis, which directly implies $\unsat(\pre(\enf(\neg \sstrat_{\playerS}^{pre},\G)))$. This concludes the argument for the final case, and the proof is complete.
	\qed
\end{proof}

\terminationfinite*
\begin{proof}
	We denote by $\size(\G)$ the number of concrete transitions of $\G$, formally: $\size(\G) = |\{ (s,s') \in S \times S \mid (\safe \lor \reach)(s,s') \text{ is valid}\}|$.
	If the domains of all variables are finite, then so is $\size(\G)$.
	We assume that this is the case and show that the subgames on which $\algreach(\G)$ recurses are strictly smaller in this measure.
	This is enough to guarantee termination.
	
	The first subgame is constructed in line~\ref{line:gpost} and takes the form:
	\[\G_{post} = \langle \post(\mathit{C})\subsm,\safe \land \varphi, \reach \land \varphi, \goal \rangle.\]
	The important restriction of this game is that both safety and reachability player transitions have the additional precondition $\varphi$.
	We may assume that $\enf(C,\G)$ is satisfiable, as otherwise the algorithm does not reach line~\ref{line:gpost}.
	Then, in particular, $C$ is satisfiable, by the definition of $\enf(C,\G)$.
	But $C = \instantiate(\varphi,\G) = (\safe \lor \reach) \land \neg \varphi \land \varphi'$, which means that there exist states $s,s'$ such that $ (\safe \lor \reach)(s,s')$, $\neg \varphi(s)$, and $\varphi(s')$ are all valid. 
	This transition from $s$ to $s'$ in $\G$ is excluded in $\G_{post}$, and as no new transitions are included, it follows that $\size(\G_{post}) < \size(\G)$.
	
	The second subgame is constructed in line~\ref{line:gpre2} and takes the form:
	\[\G_{pre} = \langle I,\safe \land \lnot F,\reach \land \lnot F, \pre(\enf(F,\G))\rangle.\]
  We may assume that $F \land (\safe \lor \reach)$ is satisfiable, as otherwise the algorithm would not have moved past line~\ref{line:safeavoidsE}.
  Observe that if $F$ is changed in line~\ref{line: E2} then it is only extended and hence satisfiability is preserved.
  As no transition satisfying $F$ exists in $\G_{pre}$ it follows that $\size(\G_{pre}) < \size(\G)$.
  This concludes the proof.
	\qed
\end{proof}

\terminationbisim*
\begin{proof}
	Let $S_1,\ldots,S_n$ be the bisimulation classes of $\G$, and $\psi_1, \ldots, \psi_n \in \logic(\var)$ be the formulas that define them.
	We define
	\[\size(\G) = |\{(S_i,S_j) \mid (\safe \lor \reach) \land \psi_i \land \psi_j' \text{ is satisfiable} \}|,\]
	which equals the number of transitions in the bisimulation quotient of $\G$ under $\sim$.
	Our aim is to show that $\algreach(\cdot)$ terminates for all subgames that are considered in any recursive call of $\algreach(\G)$.
	
	To this end, we show that $\algreach(\G)$ terminates for all reachability games $\G = \langle \init, \safe, \reach, \goal \rangle$ such that
	\begin{itemize}
	\item $\size(\G)$ is finite,
  \item the relation $\sim$ is a bisimulation on $\G$, and
	\item $\goal$ is equivalent to a disjunction of formulas $\psi_i$.
	\end{itemize}
	We show this by induction on $\size(\G)$.
	
	Let $\G = \langle \init, \safe, \reach, \goal \rangle$ satisfy these conditions, and assume that $\size(\G) = 0$.
	Then it follows that $\safe \lor \reach$ is unsatisfiable.
	This is because if any $(s_1,s_2)$ would satisfy $\safe \lor \reach$, then in particular $(\safe \lor \reach) \land \psi_i \land \psi_j'$ would be satisfied by $(s_1,s_2)$, where we assume $s_1 \in \S_i$ and $s_2 \in \S_j$.
	It follows that $\unsat(\enf(C,\G))$ in line~\ref{line: cpre} is true, as $\enf(C,\G) \implies (\safe \lor \reach)$ is valid for any $C$.
	But then Algorithm~\ref{alg:algreach} terminates on input $\G$.
	
	Now suppose that we have $\G$ with $\size(\G) > 0$.
	If the algorithm does not return in lines~\ref{line:ret1} or~\ref{line:safety wins}, we have to consider the first subgame
	\[\G_{post} = \langle \post(C)\subsm, \safe \land \varphi, \reach \land \varphi, \goal \rangle,\]
	which is constructed in line~\ref{line:gpost}.
	We may assume that for some $I \subseteq \{1,\ldots,n\}$ we have $\varphi \equiv \bigvee_{i \in I} \psi_i$, due to our assumption on the function $\interpolate$.
	Hence the effect of restricting all transitions to $\varphi$ is to remove all transitions in states not in $\bigcup \{S_i \mid i \in I\}$, which are exactly the states in $\bigcup \{S_i \mid i \in \{1,\ldots,n\} \setminus I\}$.
	It is clear that $\sim$ is still a bisimulation in the resulting game, and that the goal states are preserved.
	To see that $\size(\G_{post}) < \size(\G)$ we may assume that $\unsat(\enf(C,\G))$ is false, otherwise we would have returned in line~\ref{line:safety wins}.
	Then, in particular, there is a transition in $\G$ satisfying $\neg \varphi$, which means that there is a pair $S_i,S_j$ such that $(\safe \lor \reach) \land \neg \varphi \land \psi_i \land \psi_j'$ is satisfiable.
	This is cleary unsatisfiable when replacing $(\safe \lor \reach)$ by $(\safe \lor \reach) \land \varphi$.
	Hence, $\size(\G_{post}) < \size(\G)$.
	As a result, we can apply the induction hypothesis to conclude that the recursive call $\algreach(\G_{post})$ in line~\ref{line: recursion1} terminates.
	
	Now let us consider the second subgame $\G_{pre}$, as constructed in line~\ref{line:gpre2}.
	First, we observe that $F \equiv (\safe \lor \reach) \land \neg \varphi \land \varphi' \land R_{post}\subsp$, where $R_{post}$ is a state predicate characterizing the initial winning states of $\G_{post}$ (this uses~\Cref{thm:partcorr}).
	As $\sim$ is a bisimulation on $\G_{post}$, it follows by~\Cref{lem:bisimpreservesreach} that $R_{post}$ is equivalent to a disjunction of formulas $\psi_i$.
	As a consequence, we can equivalently write $F$ as $\phi_1 \land (\phi_2\subsp)$ for two formulas $\phi_1,\phi_2 \in \logic(\var)$ that are both equivalent to disjunctions of $\psi_i$.
	By~\Cref{lem:preenfbisim} it follows that $\pre(\enf(E,\G))$ is also equivalent to a disjunction of $\psi_i$.
	
	Restricting transitions to $\neg F$ in $\G_{pre}$ has the effect of removing all transitions from states in $\bigcup\{S_i \mid \psi_i \implies \phi_1 \text{ is valid}\}$ to states in $\bigcup\{S_i \mid \psi_i \implies \phi_2  \text{ is valid}\}$.
	It is clear that $\sim$ is still a bisimulation in the resulting game.
	Furthermore, as $\enf(F,\G)$ is satisfiable, there is at least one such transition in $\G$.
	It follows that $\size(\G_{pre}) < \size(\G)$ and hence the algorithm terminates by induction hypothesis.
	\qed
\end{proof}

\begin{lemma}
	\label{lem:preenfbisim}
	Let $\sim$ be a bisimulation on $\G$ which is also an equivalence relation, and $S_1,\ldots, S_n$ be its equivalence classes.
	Assume that $S_1,\ldots, S_n$ are defined by $\psi_1,\ldots, \psi_n \in \logic(\var)$.
	Let $\phi_1 \land (\phi_2\subsp) \in \logic(\var \cup \varp)$ be such that both $\phi_1,\phi_2$ are equivalent to disjunctions of formulas $\psi_i$.
	
	Then, $\pre(\enf(\phi_1 \land (\phi_2\subsp),\G))$ is equivalent to a disjunction of formulas $\psi_i$.
\end{lemma}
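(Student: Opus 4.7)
The plan is to show that the set of states satisfying $\pre(\enf(\phi_1 \land (\phi_2\subsp),\G))$ is closed under $\sim$. Since there are only finitely many bisimulation classes $S_1,\ldots,S_n$ and each is defined by $\psi_i \in \logic(\var)$, any $\sim$-closed set is a union of some of the $S_i$ and is hence equivalent to the disjunction of the corresponding $\psi_i$.

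First, I would unfold the definitions to obtain an explicit characterization: a state $s$ satisfies $\pre(\enf(\phi_1 \land (\phi_2\subsp),\G))$ if and only if (i) $\phi_1(s)$ holds, (ii) there exists $s'$ with $(\safe \lor \reach)(s,s')$ and $\phi_2(s')$, and (iii) there is no $s''$ with $\safe(s,s'')$ and $\neg \phi_2(s'')$. Here I use that $\phi_1$ depends only on the unprimed variables to pull it out of the existential in condition~(i), and combine the remaining conditions as stated.

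Next, I would fix $s_1 \sim s_2$ and assume the three conditions hold for $s_1$; I then verify them for $s_2$. Condition~(i) follows from the $\sim$-invariance of $\phi_1$: since $\phi_1$ is a disjunction of some $\psi_i$ and each $\psi_i$ defines a $\sim$-class, $\phi_1(s_1) \iff \phi_1(s_2)$. For condition~(ii), applying the forward (``zig'') property of the bisimulation to the witnessing transition $s_1 \to s_1'$ yields some $s_2'$ with $(\safe \lor \reach)(s_2,s_2')$ and $s_1' \sim s_2'$; the $\sim$-invariance of $\phi_2$ then gives $\phi_2(s_2')$. For condition~(iii), suppose for contradiction that there is $s_2''$ with $\safe(s_2,s_2'') \land \neg \phi_2(s_2'')$. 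Using the backward (``zag'') property, I obtain $s_1''$ with $(\safe \lor \reach)(s_1,s_1'')$ and $s_1'' \sim s_2''$. Since $\sim$ preserves $\reachvar$, the states $s_1$ and $s_2$ are controlled by the same player, so $\safe(s_2,s_2'')$ implies $\neg \reachvar(s_1)$ and hence $\safe(s_1,s_1'')$; the $\sim$-invariance of $\phi_2$ then yields $\neg \phi_2(s_1'')$, contradicting condition~(iii) for $s_1$.

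The main subtlety lies in this last step, where the paper's bisimulation only matches transitions under the disjunction $(\safe \lor \reach)$, whereas the operator $\enf$ treats $\safe$-moves specially. The resolution is that the classification of a transition as $\safe$ or $\reach$ is entirely determined by the $\reachvar$-value of its source, which is a $\sim$-invariant; thus bisimilar states agree on the $\safe$-versus-$\reach$ status of their outgoing moves, and the bisimulation-matched transitions automatically respect this split. With closure under $\sim$ established, the lemma follows by taking the disjunction of the $\psi_i$ indexing those bisimulation classes contained in $\pre(\enf(\phi_1 \land (\phi_2\subsp),\G))$.
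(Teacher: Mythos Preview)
Your proof is correct and follows essentially the same approach as the paper: show that the set defined by $\pre(\enf(\phi_1 \land (\phi_2\subsp),\G))$ is $\sim$-closed by transferring the defining conditions along bisimilar states using the $\sim$-invariance of $\phi_1$, $\phi_2$, and $\reachvar$. The only organizational difference is that the paper performs an explicit case split on whether the state is controlled by $\playerR$ or $\playerS$, whereas you handle both cases uniformly via the observation that the $\safe$-versus-$\reach$ status of outgoing transitions is determined by the $\sim$-invariant predicate $\reachvar$; your presentation is in fact slightly cleaner in this respect.
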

\begin{proof}
	We show that if there exists a state in $S_i$ that satisfies $\pre(\enf(\phi_1 \land (\phi_2\subsp),\G))$, then so do all states in $S_i$.
	Let $s_1 \in S_i$ be such that $\pre(\enf(\phi_1 \land (\phi_2\subsp),\G))(s_1)$ is valid.
	
	We make a case distinction on whether $s_1 \in S_{\playerR}$ holds.
	If so, then there exists a state $q_1$ such that $(\reach \land \phi_1 \land (\phi_2\subsp))(s_1,q_1)$ is valid.
	In particular, $\phi_1(s_1)$ and $\phi_2(q_1)$ are both valid.
	Assuming that $q_1 \in S_j$ holds, both $\psi_i \implies \phi_1$ and $\psi_j \implies \phi_2$ are valid, as both $\phi_1$ and $\phi_2$ are equivalent to disjunctions of $\psi$-formulas (which have pairwise disjoint sets of models).
	Now take any other state $s_2 \in S_i$.
	As $s_1 \sim s_2$ and $\reach(s_1,q_1)$ is valid, there exists a state $q_2 \in S_j$ such that $\reach(s_2,q_2)$ is valid.
	Furthermore, as $\psi_i(s_2)$ and $\psi_j(q_2)$ are both valid, so is $(\reach \land \phi_1 \land (\phi_2\subsp))(s_2,q_2)$.
	Hence, $\pre(\enf(\phi_1 \land (\phi_2\subsp),\G))(s_2)$ is valid.
	
	Now assume that $s_1 \in S_{\playerS}$.
	Then, for all states $q_1$ such that $\safe(s_1,q_1)$ is valid, $(\phi_1 \land (\phi_2\subsp))(s_1,q_1)$ holds.
	Whenever this is the case, and $q_1 \in S_j$ holds, it follows that $\psi_j \implies \phi_2$ is valid.

	Now take any other state $s_2 \in S_i$ and assume, for contradiction, that there exists a $q_2$ such that $\safe(s_2,q_2)$ is valid, but not $(\phi_1 \land (\phi_2\subsp))(s_2,q_2)$.
	Assuming $q_2 \in S_j$, we have that $\psi_j \land \phi_2$ is unsatisfiable.
	As $s_1 \sim s_2$ holds, we find $q_1$ such that $\safe(s_1,q_1) \land \psi_j(q_1)$ is valid.
  By the previous reasoning, this would imply that $\psi_j \implies \phi_2$ is valid.
	This is a contradiction as $\psi_j$ is satisfiable.
	\qed
\end{proof}

\end{document}